\providecommand{\tabularnewline}{\\}
\newtheorem{prop}{Proposition}
\newtheorem{thm}{Theorem}
\newcounter{mytempeqncnt}
\author{Tarik~Ait-Idir,~\IEEEmembership{Member,~IEEE,} Houda~Chafnaji, and Samir~Saoudi,~\IEEEmembership{Senior Member,~IEEE}%
\thanks{The Associate Editor coordinating the review of this paper and approving it for publication is Dr. M. C. Valenti. Manuscript received March 26, 2009; revised December 24, 2009; accepted February 21, 2010. This paper was presented in part at the 19th Annual IEEE Symposium on Personal Indoor and Mobile Radio Communications (PIMRC 2008), Cannes, France, September 2008, and in part at the IEEE Global Communications Conference (Globecom'09), Honolulu, Hawaii, Nov-Dec 2009.}%
\thanks{T. Ait-Idir and H. Chafnaji are with the Communication Systems Department, INPT, Madinat Al-Irfane, Rabat, Morocco. They are also with Institut Telecom / Telecom Bretegne/LabSticc, Brest, France (email: aitidir@ieee.org).}%
\thanks{ S. Saoudi is with Institut Telecom / Telecom Bretegne/LabSticc, Brest, France. He is also with Université Européenne de Bretagne.}}%
\begin{document}

\title{Turbo Packet Combining for Broadband Space--Time BICM Hybrid--ARQ
Systems with Co--Channel Interference}
\maketitle
\begin{abstract}
In this paper, efficient turbo packet combining for single carrier
(SC) broadband multiple-input--multiple-output (MIMO) hybrid--automatic
repeat request (ARQ) transmission with unknown co-channel interference
(CCI) is studied. We propose a new frequency domain soft minimum mean
square error (MMSE)-based signal level combining technique where received
signals and channel frequency responses (CFR)s corresponding to all
retransmissions are used to decode the data packet. We provide a recursive
implementation algorithm for the introduced scheme, and show that
both its computational complexity and memory requirements are quite
insensitive to the ARQ delay, i.e., maximum number of ARQ rounds.
Furthermore, we analyze the asymptotic performance, and show that
under a sum-rank condition on the CCI MIMO ARQ channel, the proposed
packet combining scheme is not interference-limited. Simulation results
are provided to demonstrate the gains offered by the proposed technique. \end{abstract}
\begin{keywords}
Automatic repeat request (ARQ) mechanisms, multiple-input--multiple-output
(MIMO), single carrier (SC), unknown co-channel interference (CCI),
intersymbol interference (ISI), frequency domain methods.
\end{keywords}

\section{Introduction\label{sec:Introduction}}

\PARstart{S}{pace--time--bit-interleaved} coded modulation (ST--BICM)
with iterative decoding is an attractive signaling scheme that offers
high spectral efficiencies over multiple-input--multiple-output (MIMO)-intersymbol
interference (ISI) channels \cite{Ariyavisitakul_TCOM00,Tonello,vandendorpe_SigProc_04,Visoz-groupMMSE-journal,Ait-idir_TVT}.
To combat ISI in single carrier (SC) broadband ST--BICM transmission,
frequency domain equalization, initially introduced for single antenna
systems \cite{Sari_Globecom94,Tuchler_Allerton00,Falconer_CommMag02,Vitetta_TCOM05},
has been proposed using iterative (turbo) processing \cite{Visoz_TCOM06}.
It is a receiver scheme that allows high ISI cancellation capability
at an affordable complexity cost. In practical systems, unknown co-channel
interference (CCI) caused by other transmitters (distant users and/or
neighboring cells) who simultaneously use the same radio resource
can dramatically degrade the link performance. This limitation can
be overcome by using the so-called hybrid--automatic repeat request
(ARQ) protocols, where channel coding is combined with ARQ \cite{Costello_book,Costello_Hagenauer_IT98}.
In hybrid--ARQ, erroneous data packets are kept in the receiver and
used to detect/decode the retransmitted frame \cite{SindhuTCOM77,Benelli_TCOM85,Wicker_TCOM91,Yli-Juuti_CommLett98,Liinaharja_CommLett99,Harvey_TCOM_94,Kallel_TCOM90}.
This technique is often referred to as \emph{{}``packet combining''}.
Practical packet combining schemes have been addressed in \cite{McTiffin_et_al94}.
In \cite{Caire_IT01}, an elegant information-theoretic framework
has been introduced to analyze the throughput and delay of hybrid--ARQ
under random user behavior. Interestingly, the authors have shown
that hybrid--ARQ systems are not interference limited, i.e., arbitrarily
high throughput can be achieved by simply increasing the transmit
power of all users even when multi-user detection (MUD) techniques
are not used at the receiver. Motivated by the above considerations,
we investigate efficient low-complexity turbo frequency domain reception
techniques for SC broadband ST--BICM signaling with hybrid--ARQ operating
over CCI-limited MIMO channels. 

The powerful diversity--multiplexing tradeoff tool, initially introduced
by Zheng and Tse for coherent delay-limited, i.e., quasi-static, MIMO
channels \cite{Zheng-Tse}, has been elegantly extended by El Gamal
\emph{et al.} to MIMO ARQ channels with flat fading, and referred
to as diversity--multiplexing--delay tradeoff \cite{DMD_Gamal_IT_2006}.
The authors have proved that the ARQ delay, i.e., maximum number of
ARQ protocol rounds, improves the outage probability %
\footnote{In non-ergodic, i.e., block fading quasi-static channels, the outage
probability is a meaningful measure that provides a lower bound on
the block error probability. It is defined as the probability that
the mutual information, as a function of the channel realization and
the average signal-to-noise ratio (SNR), is below the transmission
rate \cite{Tse_Viswanath_Book}. %
} performance for large classes of MIMO ARQ channels \cite{DMD_Gamal_IT_2006}.
In particular, they have demonstrated that the diversity order can
be increased due to ARQ even when the MIMO ARQ channel is long-term
static, i.e., the MIMO channel is random but fixed for all ARQ rounds.
The diversity--multiplexing--delay tradeoff has then been characterized
in the case of block-fading MIMO ARQ channels, i.e., multiple fading
blocks are allowed within the same ARQ round \cite{Chuanng_DMD_BlockFading_IT08}.
In \cite{Ait-Idir_MIMOISIARQ_TCOM08}, the outage probability of MIMO-ISI
ARQ channels has been evaluated under the assumptions of short-term
static channel dynamic %
\footnote{In the case of short-term static dynamic, the ARQ channel realizations
are independent from round to round. This dynamic applies to slow
ARQ protocols where the delay between two rounds is larger than the
channel coherence time. %
}, and Chase-type ARQ \cite{ChaseTComm1985}, i.e., the data packet
is entirely retransmitted. It has been shown that, as in the flat
fading case, ARQ presents an important source of diversity, but its
influence becomes only minimal when the ARQ delay is increased. This
observation suggests that the design of practical packet combining
schemes should target a high diversity order for early ARQ rounds.
Supplementary retransmissions are then used to correct rare erroneous
data packets, when they occur. 

More recently, packet combining for MIMO ARQ systems has been investigated
(e.g. \cite{H_Zheng_et_al_PIMRC02,Dabak_ICC03,ZhizhongDing_RiceICC03,Hottinen_et_al_ISSPA03,Samra_Ding04,Koike_ICC04,Ding_MIMOARQ_Sphere_SigProc,Krishnaswamy_VTCfall06,Cioffi_TWC_Feb09}).
Turbo combining techniques, where decoding is iteratively performed
through the exchange of soft information between the soft-input--soft-output
(SISO) packet combiner and the SISO decoder, have been proposed for
the MIMO-ISI ARQ channel using unconditional minimum mean square error
(MMSE)-aided combining \cite{Ait-Idir_WCNC_08,Ait-Idir_MIMOISIARQ_TCOM08}.
These approaches have then been extended to broadband MIMO code division
multiple access (CDMA) systems with ARQ \cite{Chafnaji_PIMRC08}.
Time domain turbo packet combining for CCI-limited MIMO-ISI ARQ channels
has been introduced in \cite{Ait-Idir_PIMRC08}. 

In this paper, we investigate efficient turbo receiver techniques
for SC ST--BICM transmission with Chase-type ARQ over broadband MIMO
channel with \emph{unknown} CCI. We introduce a frequency domain MMSE-based
turbo packet combining scheme, where all ARQ rounds are used to decode
the data packet. By using an identical cyclic prefix (CP) word for
multiple retransmissions of a symbol block, we perform transmission
combining at the signal level. The frequency domain soft MMSE packet
combiner performs soft ISI cancellation and retransmission combining
in the presence of unknown CCI jointly over all received signal blocks.
We also provide an efficient recursive implementation algorithm for
the proposed scheme, and show that both the computational load and
memory requirements are quite insensitive to the ARQ delay. The complexity
order is only cubic in terms of the number of transmit antennas. Received
signals and channel frequency responses (CFR)s corresponding to all
ARQ rounds are used without being required to be stored in the receiver.
We analyze the asymptotic performance of the proposed combining scheme.
Interestingly, we show that under a rank-condition on the MIMO ARQ
channel corresponding to unknown CCI, the proposed combining scheme
is not interference-limited, i.e.,unknown CCI can be completely removed.
Finally, we provide numerical simulation results for some scenarios
to validate our findings. 

The remainder of the paper is organized as follows. In Section \ref{sec:SC-MIMO-ARQscheme_and_CommModels}
we describe the ARQ system under consideration, along with the communication
model in the presence of unknown CCI. In Section \ref{sec:Turbo-Packet-Combining},
we present the frequency domain turbo packet combining scheme we propose
in this paper, and analyze both its complexity and memory requirements.
In Section \ref{sec:Performance-Evaluation}, we carry out the asymptotic
performance analysis, and provide representative numerical results
that demonstrate the gains achieved by the proposed scheme. Finally,
we point out conclusions in Section \ref{sec:Conclusion}.

\emph{Notation:}
\begin{itemize}
\item Superscripts $^{\star}$, $^{\top}$, and $^{H}$ denote conjugate,
transpose, and Hermitian transpose, respectively. $\mathbb{{E}}\left[.\right]$
is the mathematical expectation of the argument $\left(.\right)$. 
\item Let $\mathbf{X}$ be a square matrix, $\mathrm{diag}\left\{ \mathbf{X}\right\} $
denotes the row vector corresponding to the diagonal of $\mathbf{X}$,
and $\mathrm{tr}\left\{ \mathbf{X}\right\} $ denotes the trace of
$\mathbf{X}$. When $\mathbf{X}_{1},\cdots,\mathbf{X}_{M}\in\mathbb{C}^{N\times Q}$,
$\mathrm{diag}\left\{ \mathbf{X}_{1},\cdots,\mathbf{X}_{M}\right\} $
denotes the $MN\times MQ$ matrix whose diagonal blocks are $\mathbf{X}_{1},\cdots,\mathbf{X}_{M}$.
$\mathrm{diag}\left\{ \mathbf{x}\right\} $ is the $N\times N$ diagonal
matrix whose diagonal entries are the elements of the complex vector
$\mathbf{x}\in\mathbb{C}^{N}$. $\left(\mathbf{X}\right)_{m,m}$ denotes
the $m$th diagonal entry of matrix $\mathbf{X}$.
\item $\mathbf{I}_{N}$ is the $N\times N$ identity matrix, and $\mathbf{0}_{N\times Q}$
denotes an all zero $N\times Q$ matrix. For $i=0,\cdots,T-1$, $\mathbf{E}_{i,N}$
is a $N\times NT$ zero matrix where the $i$th $N\times N$ block
is equal to $\mathbf{I}_{N}$.
\item Operator $\otimes$ denotes the Kronecker product, and $\delta_{m,n}$
is the Kronecker symbol, i.e., $\delta_{m,n}=1$ for $m=n$ and $\delta_{m,n}=0$
for $m\neq n$.
\item For each sequence of matrices $\mathbf{X}_{0},\cdots,\mathbf{X}_{T-1}$
(respectively, scalars $x_{0},\cdots,x_{T-1}$), $\tilde{\mathbf{X}}\triangleq\frac{1}{T}\sum_{i=0}^{T-1}\mathbf{X}_{i}$
denotes its time average (respectively, $\tilde{x}\triangleq\frac{1}{T}\sum_{i=0}^{T-1}x_{i}$).
\item $\mathbf{U}_{T}$ is a $T\times T$ unitary matrix whose $\left(m,n\right)$th
element is $\left(\mathbf{U}_{T}\right)_{m,n}=\frac{1}{\sqrt{T}}\exp\left\{ -j\frac{2\pi mn}{T}\right\} $
for $m,\, n=0,\cdots,T-1$, where $j=\sqrt{-1}$. $\mathbf{U}_{T,N}$
is $TN\times TN$ defined as $\mathbf{U}_{T,N}\triangleq\mathbf{U}_{T}\otimes\mathbf{I}_{N}$.
\item For each vector $\mathbf{x}\in\mathbb{C}^{Q}$, $\mathbf{x}_{f}$
denotes the discrete Fourier Transform (DFT) of $\mathbf{x}$, i.e.,
$\mathbf{x}_{f}=\mathbf{U}_{Q}\,\mathbf{x}$.
\item The acronym i.i.d. means {}``independent and identically distributed''. 
\end{itemize}

\section{ARQ System Model \label{sec:SC-MIMO-ARQscheme_and_CommModels}}

\subsection{SC--MIMO ARQ Transmission Scheme}

We consider an SC multi-antenna-aided transmission scheme where the
transmitter and the receiver are equipped with $N_{T}$ transmit (index
$t=1,\cdots,N_{T}$) and $N_{R}$ receive (index $r=1,\cdots,N_{R}$)
antennas, respectively. The MIMO channel is frequency selective and
is composed of $L$ symbol-spaced taps (index $l=0,\cdots,L-1$).
The energy of each tap $l$ is denoted $\sigma_{l}^{2}$, and the
total energy is normalized to one, i.e., $\sum_{l=0}^{L-1}\sigma_{l}^{2}=1.$

Each information block is initially encoded then interleaved with
the aid of a semi-random interleaver $\Pi$. The resulting frame is
\emph{serial to parallel} converted and mapped over the elements of
the constellation set $\mathcal{S}$ to produce symbol matrix $\mathbf{S}\in\mathcal{S}^{N_{T}\times T}$,
where $T$ is the number of channel use (c.u). A CP word, whose length
is $T_{CP}\geq L-1$, is then appended to $\mathbf{S}$, thereby yielding
matrix $\mathbf{S}'\in\mathcal{S}^{N_{T}\times\left(T+T_{CP}\right)}$.
This allows the prevention of inter-block interference (IBI) and the
exploitation of the multipath diversity of the MIMO broadband channel.
We suppose that no channel state information (CSI) is available at
the transmitter and assume infinitely deep interleaving. Therefore,
transmitted symbols are independent and have equal transmit power,
i.e., 

\begin{equation}
\mathbb{{E}}\left[s_{t,i}s_{t',i'}^{\star}\right]=\delta_{t-t',i-i'}.\label{eq:Symb_independency}\end{equation}

At the upper layer, an ARQ protocol is used to help correct erroneous
frames. An acknowledgment message is generated after the decoding
of each information block. Therefore, when the decoding is successful,
the receiver sends back a positive acknowledgment (ACK) to the transmitter,
while the feedback of a negative acknowledgment (NACK) indicates that
the decoding outcome is erroneous. Let $K$ denote the ARQ delay,
and $k=1,\cdots,K$ denote the ARQ round index. When the transmitter
receives an ACK feedback, it stops the transmission of the current
block and moves on to the next information block. Reception of a NACK
message incurs supplementary ARQ rounds until the packet is correctly
decoded or the ARQ delay $K$ is reached. We focus on Chase-type ARQ,
i.e., the symbol matrix $\mathbf{S}'$ is completely retransmitted.
In addition, we suppose perfect packet error detection, and assume
that the one bit ACK/NACK feedback is error-free.%
\begin{figure*}[t]
\noindent \begin{centering}
\includegraphics[scale=0.55]{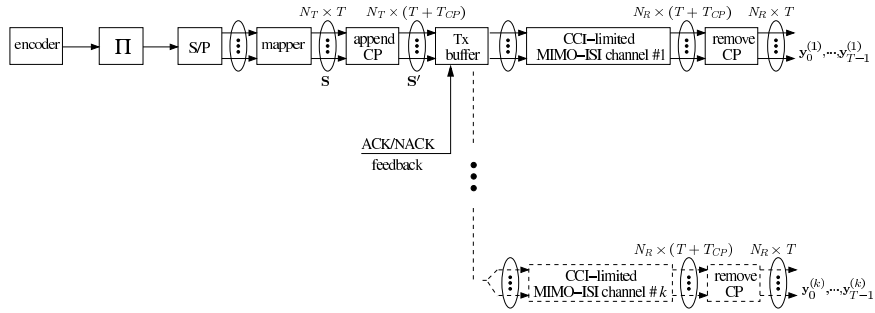}
\par\end{centering}

\caption{\label{fig:STBICM_ARQ} SC--MIMO ARQ communication scheme at ARQ round
$k$.}

\end{figure*}

\subsection{Communication Model in the Presence of Unknown CCI}

The broadband MIMO ARQ channel is assumed to be short-term static
fading, i.e., the channel independently changes from round to round.
Note that this channel dynamic applies to slow ARQ protocols where
the delay between two consecutive ARQ rounds is larger than the channel
coherence time. It also applies to orthogonal frequency division multiplexing
(OFDM) systems where frequency hopping is used to mitigate ISI. Let
$\mathbf{H}_{0}^{\left(k\right)},\cdots,\mathbf{H}_{L-1}^{\left(k\right)}\in\mathbb{C}^{N_{R}\times N_{T}}$
denote channel matrices at the $k$th ARQ round, and whose entries
are i.i.d. zero-mean circularly symmetric Gaussian, i.e., $h_{r,t,l}^{\left(k\right)}\sim\mathcal{CN}\left(0,\sigma_{l}^{2}\right)$,
where $h_{r,t,l}^{\left(k\right)}$ denotes the fading channel corresponding
to path $l$ and connecting the $t$th transmit and the $r$th receive
antennas at the $k$th ARQ round. Therefore, the channel energy at
each receive antenna $r$ is \begin{equation}
{\textstyle {\displaystyle \sum_{l=0}^{L-1}}{\displaystyle \sum_{t=1}^{N_{T}}}}\,\mathbb{{E}}\left[\left|h_{r,t,l}^{\left(k\right)}\right|^{2}\right]=N_{T}.\label{eq:Ch_energy_Rx_Ant}\end{equation}
The channel profile, i.e., power distribution $\sigma_{0}^{2},\cdots,\sigma_{L-1}^{2}$
and number of taps $L$, is supposed to be identical for at least
$K$ consecutive rounds. This is a reasonable assumption because the
channel profile dynamic mainly depends on the shadowing effect. 

Transmitted data blocks are corrupted by an \emph{unknown} CCI signal
caused by a co--channel transmission that uses $N'_{T}$ transmit
antennas (index $t'=1,\cdots,N'_{T}$) and $T$ c.u. The link between
the interferer transmitter and the receiver is composed of $L'$ taps,
where the channel matrix of each tap $l'=0,\cdots,L'-1$ at round
$k$ is $\mathbf{H}_{l'}^{\mathrm{CCI}^{\left(k\right)}}\in\mathbb{C}^{N_{R}\times N'_{T}}$
and its energy is $\sigma_{u_{l'}}^{2}$ %
\footnote{The ARQ processes corresponding to the desired user and the interferer
are not necessarily synchronized. Therefore, the round index $k$
appearing in the CCI channel matrices only refers to the index of
a realization of the interferer channel at ARQ round $k$. The same
remark holds for CCI symbols in (\ref{eq:Signal_Model}). Also, note
that $\sum_{l'=0}^{L'-1}\sigma_{u_{l'}}^{2}\neq1$ in order to account
for the path-loss between the interferer and the receiver. %
}. We suppose that the receiver has no knowledge either about the interferer
CSI or about its channel profile and number of transmit antennas (i,.e.,
parameters $N'_{T}$, $L'$, $\mathbf{H}_{l'}^{\mathrm{CCI}^{\left(k\right)}}$,
and $\sigma_{u_{l'}}^{2}$ $\forall l'$ are completely unknown at
the receiver). As the desired user, the interferer employs a CP-aided
transmission strategy. Its transmitted symbols $s_{t',i}^{\mathrm{CCI}^{\left(k\right)}}$at
each round $k$ verify the independence/energy-normalization condition
(\ref{eq:Symb_independency}) as useful symbols. Therefore, the signal-to-interference
ratio (SIR) at each receive antenna is given as \begin{equation}
\mathrm{SIR}=\frac{N_{T}}{N'_{T}\sum_{l'=0}^{L'-1}\sigma_{u_{l'}}^{2}}.\label{eq:SIR}\end{equation}
We assume perfect frame synchronization between the interferer and
the desired user. They can differ in terms of the CP word length,
which depends on the delay of the multipath channel, but are synchronized
in terms of the useful symbol frames. Under this assumption, CP deletion
yields the following baseband received $N_{R}\times1$ signal at round
$k$ and channel use $i$, \begin{equation}
\mathbf{y}_{i}^{\left(k\right)}=\sum_{l=0}^{L-1}\mathbf{H}_{l}^{\left(k\right)}\mathbf{s}_{\left(i-l\right)\,\mathrm{mod}\, T}+\underbrace{\sum_{l'=0}^{L'-1}\mathbf{H}_{l'}^{\mathrm{CCI}^{\left(k\right)}}\mathbf{s}_{\left(i-l'\right)\,\mathrm{mod}\, T}^{\mathrm{CCI}^{\left(k\right)}}+\mathbf{n}_{i}^{\left(k\right)}}_{\mathbf{w}_{i}^{\left(k\right)}\,=\,\mathrm{CCI+noise}},\label{eq:Signal_Model}\end{equation}
where $\mathbf{n}_{i}^{\left(k\right)}\sim\mathcal{CN}\left(\mathbf{0}_{N_{R}\times1},\sigma^{2}\mathbf{I}_{N_{R}}\right)$
denotes the receiver thermal noise. The SC--MIMO ARQ communication
scheme at round $k$ is depicted in Fig.  \ref{fig:STBICM_ARQ}. In
the following, we assume perfect channel estimation at each ARQ round
$k$ (i.e., $\mathbf{H}_{l}^{\left(k\right)}$ $\forall l$ are perfectly
known) while CCI channel matrices $\mathbf{H}_{l'}^{\mathrm{CCI}^{\left(k\right)}}$
$\forall l',\, k$ are completely unknown at the receiver side.

\subsubsection{Single-Round Communication Model}

To derive the block communication model corresponding to ARQ round
$k$, we consider the following block signal vector, \begin{equation}
\mathbf{y}^{\left(k\right)}\triangleq\left[\mathbf{y}_{0}^{\left(k\right)^{\top}},\cdots,\mathbf{y}_{T-1}^{\left(k\right)^{\top}}\right]^{\top}\in\mathbb{C}^{N_{R}T},\label{eq:Rx_Signal_all_cu_round_k}\end{equation}
that groups signals corresponding to the entire symbol frame. Vector
$\mathbf{y}^{\left(k\right)}$ can be expressed as, \begin{equation}
\mathbf{y}^{\left(k\right)}=\mathbf{H}^{\left(k\right)}\mathbf{s}+\mathbf{w}^{\left(k\right)},\label{eq:Block_Comm_Model_singleRound}\end{equation}
where \begin{equation}
\mathbf{s}\triangleq\left[\mathbf{s}_{0}^{\top},\cdots,\mathbf{s}_{T-1}^{\top}\right]^{\top}\in\mathcal{S}^{TN_{T}},\label{eq:Symbols_BlockVector}\end{equation}
\begin{equation}
\mathbf{w}^{\left(k\right)}\triangleq\left[\mathbf{w}_{0}^{\left(k\right)^{\top}},\cdots,\mathbf{w}_{T-1}^{\left(k\right)^{\top}}\right]^{\top}\in\mathbb{C}^{N_{R}T},\label{eq:CCI_blockVector}\end{equation}
\begin{equation}
\mathbf{H}^{\left(k\right)}\triangleq\left[\begin{array}{cccc}
\mathbf{H}_{0}^{\left(k\right)} & \mathbf{0}_{N_{R}\times N_{T}} & \cdots & \mathbf{0}_{N_{R}\times N_{T}}\\
\vdots & \mathbf{H}_{0}^{\left(k\right)} &  & \vdots\\
\mathbf{H}_{L-1}^{\left(k\right)} & \vdots &  & \vdots\\
\mathbf{0}_{N_{R}\times N_{T}} & \mathbf{H}_{L-1}^{\left(k\right)} &  & \mathbf{0}_{N_{R}\times N_{T}}\\
\vdots & \mathbf{0}_{N_{R}\times N_{T}} &  & \mathbf{H}_{L-1}^{\left(k\right)}\\
\vdots & \vdots &  & \vdots\\
\mathbf{0}_{N_{R}\times N_{T}} & \mathbf{0}_{N_{R}\times N_{T}} & \cdots & \mathbf{H}_{0}^{\left(k\right)}\end{array}\right]_{N_{R}T\times N_{T}T}\label{eq:Block_circulant_Mat_singleRound}\end{equation}
is a block circulant matrix that can be block diagonalized in a Fourier
basis as \begin{equation}
\mathbf{H}^{\left(k\right)}=\mathbf{U}_{T,N_{R}}^{H}\mathbf{\Lambda}^{\left(k\right)}\mathbf{U}_{T,N_{T}},\label{eq:Block_diagnalization_singleRound}\end{equation}
where \begin{equation}
\mathbf{\Lambda}^{\left(k\right)}\triangleq\mathrm{diag}\left\{ \mathbf{\Lambda}_{0}^{\left(k\right)},\cdots,\mathbf{\Lambda}_{T-1}^{\left(k\right)}\right\} \in\mathbb{C}^{N_{R}T\times N_{T}T}.\label{eq:Block_diag_Mat_SingleRound}\end{equation}
Exploiting (\ref{eq:Block_diagnalization_singleRound}) and the block
circulant structure of $\mathbf{H}^{\left(k\right)}$, we get \begin{equation}
\mathbf{\Lambda}_{i}^{\left(k\right)}=\sum_{l=0}^{L-1}\mathbf{H}_{l}^{\left(k\right)}\exp\left\{ -j\frac{2\pi il}{T}\right\} .\label{eq:CSI_DFT_oneRound}\end{equation}
Applying the DFT $\mathbf{U}_{T,N_{R}}$ on signal vector $\mathbf{y}^{\left(k\right)}$
yields the single-round frequency domain communication model \begin{equation}
\mathbf{y}_{f}^{\left(k\right)}=\mathbf{\Lambda}^{\left(k\right)}\mathbf{s}_{f}+\mathbf{w}_{f}^{\left(k\right)},\label{eq:FD_CommModel_oneARQround}\end{equation}
where $\mathbf{y}_{f}^{\left(k\right)}$, $\mathbf{s}_{f}$, and $\mathbf{w}_{f}^{\left(k\right)}$
denote the DFT of $\mathbf{y}^{\left(k\right)}$, $\mathbf{s}$, and
$\mathbf{w}^{\left(k\right)}$, respectively.

\subsubsection{Multi-Round Communication Model}

Let us suppose that received signals and channel matrices corresponding
to ARQ rounds $1,\cdots,k$ are available at the receiver. First,
we introduce the signal vector notation \begin{equation}
\underline{\mathbf{y}}_{i}^{\left(k\right)}\triangleq\left[\mathbf{y}_{i}^{\left(1\right)^{\top}},\cdots,\mathbf{y}_{i}^{\left(k\right)^{\top}}\right]^{\top}\in\mathbb{C}^{kN_{R}},\label{eq:multipleRx_signal}\end{equation}
where received signals corresponding to multiple ARQ rounds are grouped
in such a way to construct $kN_{R}$ virtual receive antennas. Similarly,
we define, \begin{equation}
\underline{\mathbf{H}}_{l}^{\left(k\right)}\triangleq\left[\mathbf{H}_{l}^{\left(1\right)^{\top}},\cdots,\mathbf{H}_{l}^{\left(k\right)^{\top}}\right]^{\top}\in\mathbb{C}^{kN_{R}\times N_{T}},\label{eq:mutli_round_CSI}\end{equation}
\begin{equation}
\underline{\mathbf{w}}_{i}^{\left(k\right)}\triangleq\left[\mathbf{w}_{i}^{\left(1\right)^{\top}},\cdots,\mathbf{w}_{i}^{\left(k\right)^{\top}}\right]^{\top}\in\mathbb{C}^{kN_{R}}.\label{eq:multipleRounds_CCInoise}\end{equation}
The block signal vector that serves for jointly performing, at ARQ
round $k$, packet combining and equalization in the presence of CCI
is constructed similarly to (\ref{eq:Rx_Signal_all_cu_round_k}),
\begin{equation}
\underline{\mathbf{y}}^{\left(k\right)}\triangleq\left[\underline{\mathbf{y}}_{0}^{\left(k\right)^{\top}},\cdots,\underline{\mathbf{y}}_{T-1}^{\left(k\right)^{\top}}\right]^{\top}\in\mathbb{C}^{kN_{R}T},\label{eq:Sig_combining}\end{equation}
and can be expressed as, \begin{equation}
\underline{\mathbf{y}}^{\left(k\right)}=\underline{\mathbf{H}}^{\left(k\right)}\mathbf{s}+\underline{\mathbf{w}}^{\left(k\right)},\label{eq:Block_Comm_Model_Combining}\end{equation}
where \begin{equation}
\underline{\mathbf{w}}^{\left(k\right)}\triangleq\left[\underline{\mathbf{w}}_{0}^{\left(k\right)^{\top}},\cdots,\underline{\mathbf{w}}_{T-1}^{\left(k\right)^{\top}}\right]^{\top}\in\mathbb{C}^{kN_{R}T}.\label{eq:CCI_blockVector}\end{equation}
Matrix $\underline{\mathbf{H}}^{\left(k\right)}$ has the same structure
as (\ref{eq:Block_circulant_Mat_singleRound}), where its first $TkN_{R}\times N_{T}$
block column is equal to\begin{equation}
\left[\underline{\mathbf{H}}_{0}^{\left(k\right)^{\top}},\cdots,\underline{\mathbf{H}}_{L-1}^{\left(k\right)^{\top}},\mathbf{0}_{N_{T}\times\left(T-L\right)kN_{R}}\right]^{\top}.\label{eq:Block_circ_mat_1st_col}\end{equation}
$\underline{\mathbf{H}}^{\left(k\right)}$ can be factorized similarly
to (\ref{eq:Block_diagnalization_singleRound}) as,\begin{equation}
\underline{\mathbf{H}}^{\left(k\right)}=\mathbf{U}_{T,kN_{R}}^{H}\underline{\mathbf{\Lambda}}^{\left(k\right)}\mathbf{U}_{T,N_{T}},\label{eq:Block_diagnalization}\end{equation}
where\begin{equation}
\underline{\mathbf{\Lambda}}^{\left(k\right)}\triangleq\mathrm{diag}\left\{ \underline{\mathbf{\Lambda}}_{0}^{\left(k\right)},\cdots,\underline{\mathbf{\Lambda}}_{T-1}^{\left(k\right)}\right\} \in\mathbb{C}^{kN_{R}T\times N_{T}T},\label{eq:Block_diag_Mat}\end{equation}
\begin{equation}
\underline{\mathbf{\Lambda}}_{i}^{\left(k\right)}\triangleq\left[\mathbf{\Lambda}_{i}^{\left(1\right)^{\top}},\cdots,\mathbf{\Lambda}_{i}^{\left(k\right)^{\top}}\right]^{\top}\in\mathbb{C}^{kN_{R}\times N_{T}},\label{eq:CSI_DFT_allRounds}\end{equation}
and matrices $\mathbf{\Lambda}_{i}^{\left(k'\right)}$, $k'=1,\cdots,k$,
are given by (\ref{eq:CSI_DFT_oneRound}). The multi-round frequency
domain communication model at ARQ round $k$ is then expressed as,
\begin{equation}
\underline{\mathbf{y}}_{f}^{\left(k\right)}=\underline{\mathbf{\Lambda}}^{\left(k\right)}\mathbf{s}_{f}+\underline{\mathcal{\mathbf{w}}}_{f}^{\left(k\right)},\label{eq:FD_BlockCommModel_round_k}\end{equation}
where $\underline{\mathbf{y}}_{f}^{\left(k\right)}$ and $\underline{\mathcal{\mathbf{w}}}_{f}^{\left(k\right)}$
denote the DFT of $\underline{\mathbf{y}}^{\left(k\right)}$ and $\underline{\mathbf{w}}^{\left(k\right)}$,
respectively.%
\begin{figure}[t]
\noindent \begin{centering}
\includegraphics[scale=0.49]{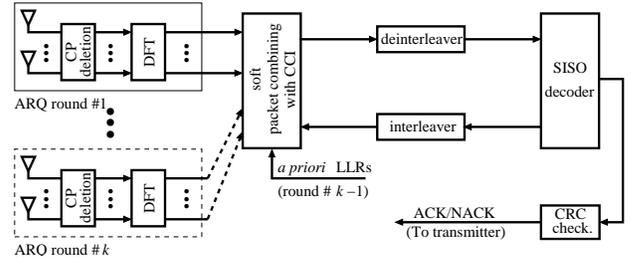}
\par\end{centering}

\caption{\label{fig:receiver_scheme} Block diagram of the receiver scheme
at ARQ round $k$}

\end{figure}

\section{Frequency Domain Turbo Packet Combining in the Presence of Unknown
CCI\label{sec:Turbo-Packet-Combining}}

\subsection{General Description}

At each ARQ round, the decoding of a data packet is performed by iteratively
exchanging soft information in the form of log-likelihood ratio (LLR)
values between the \emph{soft packet combiner}, i.e., the joint transmission
combining and equalization unit, and the soft-input--soft-output (SISO)
decoder. Let us suppose that, at ARQ round $k$, all received signals
and channel matrices corresponding to previous rounds $k-1,\cdots,1$
are available at the receiver. Note that this assumption could not
be feasible in practice since the receiver will require a huge memory.
In Subsection \ref{sub:Implementation}, we show that the proposed
turbo packet combining algorithm requires little memory while it uses
signals and CSIs corresponding to all ARQ rounds $1,\cdots,k$. The
block diagram of the frequency domain turbo packet combining receiver
at ARQ round $k$ is depicted in Fig. \ref{fig:receiver_scheme}.

First, the multiple ARQ rounds frequency domain block signal vector
$\underline{\mathbf{y}}_{f}^{\left(k\right)}\triangleq\mathbf{U}_{T,kN_{R}}\underline{\mathbf{y}}^{\left(k\right)}$
and CFR $\underline{\mathbf{\Lambda}}^{\left(k\right)}$ are constructed.
Second, the soft packet combiner estimates the covariance of unknown
CCI plus noise, then computes the multi-transmission MMSE filter that
takes into account both co-antenna interference (CAI) and ISI while
suppressing unknown CCI. These two elements are then used with \emph{a
priori} information to compute extrinsic LLRs corresponding to coded
and interleaved bits. The generated soft information is transferred
to the SISO decoder to compute \emph{a posteriori} LLRs about both
coded and useful bits. Only extrinsic information is fed back to the
soft packet combiner to help perform transmission combining and equalization
in the next turbo iteration. The iterative soft packet combining and
decoding process is stopped after a preset number of turbo iterations
and decision about the data packet is performed. The ACK/NACK message
is then sent back to the transmitter depending on the decoding outcome.
Note that during the first iteration\emph{ a priori} LLR values are
the output of the SISO decoder obtained at the last iteration of previous
round $k-1$.

\subsection{Properties of CCI plus Noise Covariance\label{sub:MultipleRoundCov_Properties} }

In this subsection, we focus on covariance properties of CCI plus
noise present in both the single-round and multi-round communication
models given by (\ref{eq:Block_Comm_Model_singleRound}) and (\ref{eq:Block_Comm_Model_Combining}),
respectively. These properties present an important ingredient in
the turbo packet combining algorithm we introduce in Subsection \ref{sub:ProposedScheme}.

Let \emph{$\mathbf{\Theta}_{k}$ }denote the covariance of CCI plus
noise $\mathbf{w}_{i}^{\left(k\right)}$ present in received signal
(\ref{eq:Signal_Model}) at round $k$, \begin{equation}
\mathbf{\mathbf{\Theta}}_{k}\triangleq\mathbb{{E}}\left[\mathbf{w}_{i}^{\left(k\right)}\mathbf{w}_{i}^{\left(k\right)^{H}}\right]\in\mathbb{C}^{N_{R}\times N_{R}}.\label{eq:CCI_noise_cov_Rx_Signal}\end{equation}
Let us group covariance matrices corresponding to rounds $1,\cdots,k$
in the block diagonal matrix \emph{\begin{equation}
\boldsymbol{{\Xi}}_{k}\triangleq\mathrm{diag}\left\{ \mathbf{\Theta}_{1},\cdots,\mathbf{\Theta}_{k}\right\} \mathbb{\in C}^{kN_{R}\times kN_{R}}.\label{eq:Concat_Cov_one_round}\end{equation}
}
\begin{prop}
\label{pro:MultiRound_Cov}The covariance $\underline{\boldsymbol{{\Xi}}}_{k}\triangleq\mathbb{{E}}\left[\underline{\mathbf{w}}^{\left(k\right)}\underline{\mathbf{w}}^{\left(k\right)^{H}}\right]$
of the CCI plus noise block vector $\underline{\mathbf{w}}^{\left(k\right)}$
present in the multi-round communication model (\ref{eq:Block_Comm_Model_Combining})
after $k$ rounds is expressed as \emph{\begin{equation}
\underline{\boldsymbol{{\Xi}}}_{k}=\mathbf{I}_{T}\otimes\boldsymbol{{\Xi}}_{k}\,\,\mathbb{\in C}^{TkN_{R}\times TkN_{R}}.\label{eq:Cov_AllRounds}\end{equation}
}\end{prop}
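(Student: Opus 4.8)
The plan is to prove the Kronecker factorization by computing $\underline{\boldsymbol{{\Xi}}}_k=\mathbb{{E}}[\underline{\mathbf{w}}^{\left(k\right)}\underline{\mathbf{w}}^{\left(k\right)^{H}}]$ blockwise and matching it against $\mathbf{I}_T\otimes\boldsymbol{{\Xi}}_k$. Since $\underline{\mathbf{w}}^{\left(k\right)}$ stacks the sub-vectors $\underline{\mathbf{w}}_i^{\left(k\right)}=[\mathbf{w}_i^{\left(1\right)^{\top}},\cdots,\mathbf{w}_i^{\left(k\right)^{\top}}]^{\top}$ of (\ref{eq:multipleRounds_CCInoise}) over the channel uses $i=0,\cdots,T-1$ exactly as in (\ref{eq:Sig_combining}), the matrix $\underline{\boldsymbol{{\Xi}}}_k$ is naturally partitioned into $T\times T$ outer blocks of size $kN_R\times kN_R$, whose $\left(i,i'\right)$ block is $\mathbb{{E}}[\underline{\mathbf{w}}_i^{\left(k\right)}\underline{\mathbf{w}}_{i'}^{\left(k\right)^{H}}]$, and each of these is in turn partitioned into $k\times k$ inner sub-blocks $\mathbb{{E}}[\mathbf{w}_i^{\left(m\right)}\mathbf{w}_{i'}^{\left(n\right)^{H}}]\in\mathbb{C}^{N_R\times N_R}$, $m,n=1,\cdots,k$. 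It therefore suffices to show that the generic sub-block equals $\delta_{i,i'}\,\delta_{m,n}\,\mathbf{\Theta}_m$; the two nested block structures then assemble to $\delta_{i,i'}\,\mathrm{diag}\{\mathbf{\Theta}_1,\cdots,\mathbf{\Theta}_k\}=\delta_{i,i'}\boldsymbol{{\Xi}}_k$ on the outer level, which is precisely $\mathbf{I}_T\otimes\boldsymbol{{\Xi}}_k$.

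To evaluate the sub-block I would substitute the definition of $\mathbf{w}_i^{\left(k\right)}$ from (\ref{eq:Signal_Model}), splitting it into its dispersive CCI part $\sum_{l'}\mathbf{H}_{l'}^{\mathrm{CCI}^{\left(m\right)}}\mathbf{s}_{\left(i-l'\right)\,\mathrm{mod}\,T}^{\mathrm{CCI}^{\left(m\right)}}$ and its thermal-noise part $\mathbf{n}_i^{\left(m\right)}$, and expand the outer product by bilinearity. The noise auto-term and the mixed noise--CCI terms are handled first: because $\mathbf{n}_i^{\left(m\right)}$ is zero-mean, temporally white, independent across rounds and independent of the CCI, the mixed terms drop out and the noise contributes only $\sigma^2\mathbf{I}_{N_R}\,\delta_{i,i'}\,\delta_{m,n}$.

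The two decorrelation statements then follow from separate independence arguments. For distinct rounds ($m\neq n$) the CCI--CCI term vanishes immediately: the short-term-static assumption makes $\mathbf{H}_{l'}^{\mathrm{CCI}^{\left(m\right)}}$ and $\mathbf{H}_{l''}^{\mathrm{CCI}^{\left(n\right)}}$ independent and zero-mean (and, equivalently, the non-synchronized interferer symbols are independent across the desired user's rounds), so the expectation factors and annihilates; this is the routine half. The delicate half is the within-round, distinct-channel-use case ($m=n$, $i\neq i'$). Because the interferer is itself CP-aided and dispersive, its contribution is a cyclic convolution, so conditioned on a single CCI channel realization the cross-time correlation $\sum_{l',l''}\mathbf{H}_{l'}^{\mathrm{CCI}}\mathbf{H}_{l''}^{\mathrm{CCI}H}\,\delta_{\left(i-l'\right)\,\mathrm{mod}\,T,\,\left(i'-l''\right)\,\mathrm{mod}\,T}$ is generally nonzero. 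The vanishing is recovered only after using the symbol whiteness (\ref{eq:Symb_independency}) to collapse the double tap-sum onto the constraint $\left(i-l'\right)\equiv\left(i'-l''\right)\ \mathrm{mod}\ T$ and then taking the expectation over the i.i.d.\ zero-mean CCI taps, for which $\mathbb{{E}}[\mathbf{H}_{l'}^{\mathrm{CCI}}\mathbf{H}_{l''}^{\mathrm{CCI}H}]\propto\delta_{l',l''}\mathbf{I}_{N_R}$ forces $l'=l''$ and hence $i\equiv i'$. I expect this temporal-whiteness step to be the main obstacle, and I would make explicit there with respect to which expectation (over the CCI symbols, the noise, and the CCI channel law) the covariance is defined, since that is exactly what turns the circulant cross-time terms into a clean block-diagonal structure.

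Collecting the surviving diagonal contributions yields $\mathbb{{E}}[\mathbf{w}_i^{\left(m\right)}\mathbf{w}_{i'}^{\left(n\right)^{H}}]=\delta_{i,i'}\,\delta_{m,n}\,\mathbf{\Theta}_m$ with $\mathbf{\Theta}_m$ as in (\ref{eq:CCI_noise_cov_Rx_Signal}). Feeding this into the inner $k\times k$ arrangement gives $\boldsymbol{{\Xi}}_k=\mathrm{diag}\{\mathbf{\Theta}_1,\cdots,\mathbf{\Theta}_k\}$ on the time-diagonal and $\mathbf{0}$ off it, and feeding that into the outer $T\times T$ arrangement gives the claimed identity $\underline{\boldsymbol{{\Xi}}}_k=\mathbf{I}_T\otimes\boldsymbol{{\Xi}}_k$.
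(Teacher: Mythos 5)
Your proof is correct and takes essentially the same route as the paper's own (very terse) proof: blockwise evaluation of the expectation, with cross-round terms annihilated by the short-term-static independence of the CCI channel, and the same-round cross-time terms---the genuinely delicate part, which you rightly isolate---annihilated by combining the CCI symbol whiteness (\ref{eq:Symb_independency}) with independence and zero mean of the CCI tap matrices, the expectation being taken over the CCI channel law as well. One cosmetic remark: you do not actually need $\mathbb{E}\left[\mathbf{H}_{l'}^{\mathrm{CCI}}\mathbf{H}_{l''}^{\mathrm{CCI}^{H}}\right]\propto\delta_{l',l''}\mathbf{I}_{N_{R}}$ (spatial whiteness within a tap); the factor $\delta_{l',l''}$ alone forces $i=i'$, which is precisely why the paper can claim that no assumption on the spatial structure of the CCI matrices is used.
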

\begin{proof}
The expression in (\ref{eq:Cov_AllRounds}) is easily obtained by
calculating the mathematical expectation of $\underline{\mathbf{w}}^{\left(k\right)}\underline{\mathbf{w}}^{\left(k\right)^{H}}$.
In the derivation, we only exploit the independence between the entries
of $\mathbf{H}_{l}^{\mathrm{CCI}^{\left(k\right)}}$ and $\mathbf{H}_{l'}^{\mathrm{CCI}^{\left(k'\right)}}$
$\forall\, l,l',k,$ and $k'$ (i.e., short-term static block fading
dynamic of the CCI MIMO ARQ channel), and the fact that CCI symbols
satisfy (\ref{eq:Symb_independency}). No assumption on the structure
of the CCI block matrix is used. A detailed proof of (\ref{eq:Cov_AllRounds})
in the case of sliding-window aided time-domain detection can be found
in \cite[Subsection III.C]{Ait-Idir_PIMRC08}.\end{proof}
\begin{prop}
\label{pro:Cov_singleRound}The covariance $\underline{\boldsymbol{\mathbf{\Theta}}}_{k}\triangleq\mathbb{{E}}\left[\mathbf{w}^{\left(k\right)}\mathbf{w}^{\left(k\right)^{H}}\right]$
of the single-round CCI plus noise block vector $\mathbf{w}^{\left(k\right)}$
at ARQ round $k$ is \begin{equation}
\underline{\boldsymbol{\mathbf{\Theta}}}_{k}=\mathbf{I}_{T}\otimes\mathbf{\Theta}_{k}.\label{eq:Round_k_CCIplusNoise_Cov}\end{equation}
\end{prop}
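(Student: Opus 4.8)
The plan is to compute $\underline{\boldsymbol{\mathbf{\Theta}}}_k$ one $N_R\times N_R$ block at a time and to show that only its diagonal blocks survive. Writing $\mathbf{w}^{\left(k\right)}=\left[\mathbf{w}_0^{\left(k\right)^\top},\cdots,\mathbf{w}_{T-1}^{\left(k\right)^\top}\right]^\top$, the $\left(i,j\right)$-th block of $\underline{\boldsymbol{\mathbf{\Theta}}}_k$ is $\mathbb{E}\left[\mathbf{w}_i^{\left(k\right)}\mathbf{w}_j^{\left(k\right)^{H}}\right]$. I would insert the CCI-plus-noise expression $\mathbf{w}_i^{\left(k\right)}=\sum_{l'=0}^{L'-1}\mathbf{H}_{l'}^{\mathrm{CCI}^{\left(k\right)}}\mathbf{s}_{\left(i-l'\right)\bmod T}^{\mathrm{CCI}^{\left(k\right)}}+\mathbf{n}_i^{\left(k\right)}$ from (\ref{eq:Signal_Model}) and expand the product into a CCI--CCI term, two CCI--noise cross terms, and a noise--noise term. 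Because the thermal noise is zero-mean and independent of both the CCI symbols and the CCI channel, the two cross terms vanish, while the noise--noise term contributes $\sigma^2\delta_{i,j}\mathbf{I}_{N_R}$ by the whiteness of $\mathbf{n}_i^{\left(k\right)}$.

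The substance is in the CCI--CCI term, and this is the step I expect to carry the real content. Factoring out the independent CCI channel, the surviving symbol statistic is $\mathbb{E}\left[\mathbf{s}_{\left(i-l'\right)\bmod T}^{\mathrm{CCI}^{\left(k\right)}}\left(\mathbf{s}_{\left(j-m'\right)\bmod T}^{\mathrm{CCI}^{\left(k\right)}}\right)^{H}\right]=\delta_{\left(i-l'\right)\bmod T,\,\left(j-m'\right)\bmod T}\,\mathbf{I}_{N'_T}$, inherited from the per-symbol normalization (\ref{eq:Symb_independency}). For $i=j$ this forces $m'=l'$ and leaves $\sum_{l'=0}^{L'-1}\mathbb{E}\left[\mathbf{H}_{l'}^{\mathrm{CCI}^{\left(k\right)}}\left(\mathbf{H}_{l'}^{\mathrm{CCI}^{\left(k\right)}}\right)^{H}\right]+\sigma^2\mathbf{I}_{N_R}$, which does not depend on $i$ and equals $\mathbf{\Theta}_k$ by the definition (\ref{eq:CCI_noise_cov_Rx_Signal}). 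For $i\neq j$ the key observation is that a nonzero Kronecker delta now requires $l'\neq m'$: indeed $\left(i-l'\right)\equiv\left(j-m'\right)\pmod T$ together with $l'=m'$ would give $i\equiv j\pmod T$, impossible for distinct $i,j\in\left\{0,\cdots,T-1\right\}$. For such distinct taps the zero-mean, mutually uncorrelated CCI channel entries give $\mathbb{E}\left[\mathbf{H}_{l'}^{\mathrm{CCI}^{\left(k\right)}}\left(\mathbf{H}_{m'}^{\mathrm{CCI}^{\left(k\right)}}\right)^{H}\right]=\mathbf{0}$, so every off-diagonal block vanishes.

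Collecting the blocks, $\underline{\boldsymbol{\mathbf{\Theta}}}_k$ carries $\mathbf{\Theta}_k$ in each of its $T$ diagonal blocks and zeros elsewhere, which is precisely $\mathbf{I}_T\otimes\mathbf{\Theta}_k$, establishing (\ref{eq:Round_k_CCIplusNoise_Cov}). The main obstacle is the off-diagonal step: the multitap, ISI-like structure of the CCI superficially appears to correlate neighboring channel uses, and what rescues the block-diagonal form is the averaging over the i.i.d. zero-mean CCI channel taps, which annihilates all cross-tap products. This is exactly the mechanism the authors already invoke for Proposition~\ref{pro:MultiRound_Cov}; in fact Proposition~\ref{pro:Cov_singleRound} is just the single-round instance of that computation, so it may equally be obtained by specializing the argument behind Proposition~\ref{pro:MultiRound_Cov}.
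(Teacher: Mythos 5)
Your proof is correct, and mechanically it is the same argument the paper relies on, just written out in full rather than cited. The paper disposes of Proposition \ref{pro:Cov_singleRound} in one line --- invoke Proposition \ref{pro:MultiRound_Cov} for a single round --- and the proof of Proposition \ref{pro:MultiRound_Cov} is itself only a sketch: it names precisely your two ingredients (independence of the CCI channel entries across taps and rounds, and the symbol condition (\ref{eq:Symb_independency})) and defers the actual expectation calculation to an external reference. Your blockwise expansion is exactly that deferred calculation specialized to one round, as you yourself note in closing. Where your write-up goes beyond the paper is in making the crux explicit: for $0<\left|i-j\right|<L'$ (up to the cyclic wrap), the block $\mathbb{{E}}\left[\mathbf{w}_{i}^{\left(k\right)}\mathbf{w}_{j}^{\left(k\right)^{H}}\right]$ \emph{conditioned} on the CCI channel is generically nonzero --- it equals the sum of $\mathbf{H}_{l'}^{\mathrm{CCI}^{\left(k\right)}}\mathbf{H}_{m'}^{\mathrm{CCI}^{\left(k\right)^{H}}}$ over tap pairs $l'\neq m'$ with $l'-m'\equiv i-j\pmod T$ --- so the block-diagonal form (\ref{eq:Round_k_CCIplusNoise_Cov}) emerges only after averaging over the zero-mean, tap-independent CCI channel, i.e., $\mathbf{\Theta}_{k}$ must be read as the unconditional covariance. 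That reading is the one the paper intends (it is consistent with the ``unconditional MMSE'' filtering it employs and with the assumptions listed under Proposition \ref{pro:MultiRound_Cov}), but it is left implicit there; your proof surfaces it, which is the only real content of the statement.
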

\begin{proof}
The proof follows by simply invoking Proposition \ref{pro:MultiRound_Cov}
for one round. \end{proof}
\begin{prop}
\label{pro:Cov_FD_CCI}Covariance matrices of frequency domain CCI
plus noise vectors\emph{ $\underline{\mathcal{\mathbf{w}}}_{f}^{\left(k\right)}$
}and\emph{ $\mathbf{w}_{f}^{\left(k\right)}$} (corresponding to the
DFTs of $\underline{\mathbf{w}}^{\left(k\right)}$ and $\mathbf{w}^{\left(k\right)}$,
respectively) are\emph{ $\underline{\boldsymbol{{\Xi}}}_{k}$ }and\emph{
$\underline{\mathbf{\Theta}}_{k}$}, respectively. \end{prop}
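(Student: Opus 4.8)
The plan is to treat the passage to the frequency domain as what it is, a unitary change of basis, so that the covariance of each transformed vector is obtained simply by conjugating the corresponding time-domain covariance by the DFT matrix. First I would write, using $\underline{\mathbf{w}}_{f}^{\left(k\right)}=\mathbf{U}_{T,kN_{R}}\underline{\mathbf{w}}^{\left(k\right)}$ together with the definition of $\underline{\boldsymbol{{\Xi}}}_{k}$,
\begin{equation}
\mathbb{{E}}\left[\underline{\mathbf{w}}_{f}^{\left(k\right)}\underline{\mathbf{w}}_{f}^{\left(k\right)^{H}}\right]=\mathbf{U}_{T,kN_{R}}\,\underline{\boldsymbol{{\Xi}}}_{k}\,\mathbf{U}_{T,kN_{R}}^{H},
\end{equation}
and then invoke Proposition \ref{pro:MultiRound_Cov} to replace $\underline{\boldsymbol{{\Xi}}}_{k}$ by its Kronecker form $\mathbf{I}_{T}\otimes\boldsymbol{{\Xi}}_{k}$.

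Second, I would expand the DFT matrix through its own Kronecker structure, $\mathbf{U}_{T,kN_{R}}=\mathbf{U}_{T}\otimes\mathbf{I}_{kN_{R}}$, so that $\mathbf{U}_{T,kN_{R}}^{H}=\mathbf{U}_{T}^{H}\otimes\mathbf{I}_{kN_{R}}$, and apply the mixed-product rule $\left(\mathbf{A}\otimes\mathbf{B}\right)\left(\mathbf{C}\otimes\mathbf{D}\right)=\left(\mathbf{A}\mathbf{C}\right)\otimes\left(\mathbf{B}\mathbf{D}\right)$ twice. This collapses the conjugation to
\begin{equation}
\left(\mathbf{U}_{T}\mathbf{I}_{T}\mathbf{U}_{T}^{H}\right)\otimes\left(\mathbf{I}_{kN_{R}}\boldsymbol{{\Xi}}_{k}\mathbf{I}_{kN_{R}}\right)=\left(\mathbf{U}_{T}\mathbf{U}_{T}^{H}\right)\otimes\boldsymbol{{\Xi}}_{k},
\end{equation}
and since $\mathbf{U}_{T}$ is unitary the first factor is $\mathbf{I}_{T}$, leaving exactly $\mathbf{I}_{T}\otimes\boldsymbol{{\Xi}}_{k}=\underline{\boldsymbol{{\Xi}}}_{k}$. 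The single-round claim for $\mathbf{w}_{f}^{\left(k\right)}$ follows by the identical argument with $kN_{R}$ replaced by $N_{R}$ and $\boldsymbol{{\Xi}}_{k}$ replaced by $\mathbf{\Theta}_{k}$, invoking Proposition \ref{pro:Cov_singleRound} in place of Proposition \ref{pro:MultiRound_Cov}.

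The conceptual point---and the only thing that genuinely has to be checked---is that the time-domain covariance is proportional to the identity along precisely the block (channel-use) index on which the DFT acts: the Kronecker factor attached to $\mathbf{U}_{T}$ is $\mathbf{I}_{T}$, which the unitary $\mathbf{U}_{T}$ leaves invariant, while the spatial factor $\boldsymbol{{\Xi}}_{k}$ is untouched because the DFT acts on that subspace as the identity $\mathbf{I}_{kN_{R}}$. Equivalently, $\underline{\boldsymbol{{\Xi}}}_{k}$ commutes with $\mathbf{U}_{T,kN_{R}}$, so the transform is covariance-preserving here. There is no real obstacle beyond bookkeeping the Kronecker factors in the right order; the substantive content was already delivered by Propositions \ref{pro:MultiRound_Cov} and \ref{pro:Cov_singleRound}, whose $\mathbf{I}_{T}\otimes\left(\cdot\right)$ structure is exactly what makes the frequency-domain whiteness across channel uses automatic.
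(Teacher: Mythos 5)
Your proof is correct and takes essentially the same approach as the paper: the paper's one-line argument---that $\underline{\boldsymbol{{\Xi}}}_{k}$ and $\underline{\mathbf{\Theta}}_{k}$ are block circulant and block diagonal (i.e., of the form $\mathbf{I}_{T}\otimes\left(\cdot\right)$), hence invariant under conjugation by $\mathbf{U}_{T,kN_{R}}$ and $\mathbf{U}_{T,N_{R}}$---is precisely what you spell out via the Kronecker mixed-product rule. Your write-up just supplies the bookkeeping the paper leaves implicit.
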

\begin{proof}
The proof of Proposition \ref{pro:Cov_FD_CCI} follows from the fact
that $\underline{\boldsymbol{{\Xi}}}_{k}$ and $\underline{\mathbf{\Theta}}_{k}$
are block circulant and block diagonal matrices.
\end{proof}
Proposition \ref{pro:MultiRound_Cov} indicates that the covariance
of the multi-round CCI plus noise vector can be obtained by separately
computing single-round covariances using Proposition \ref{pro:Cov_singleRound}.
This result greatly impacts the computational complexity of the proposed
algorithm as it will be shown in Subsection \ref{sub:Implementation}.

\subsection{Proposed Scheme\label{sub:ProposedScheme}}

In this subsection, we derive the frequency domain MMSE-based soft
packet combiner that cancels both CAI and ISI jointly over multiple
ARQ rounds in the presence of unknown CCI. 

To combine signals corresponding to ARQ rounds $1,\cdots,k$, we use
conventional soft parallel interference cancellation (PIC) (of both
multi-round CAI and ISI) and unconditional MMSE filtering techniques
\cite{vandendorpe_SigProc_04}. Therefore, at each turbo iteration
of ARQ round $k$, the MMSE-based soft packet combiner produces a
complex scalar decision $z_{t,i}^{\left(k\right)}$ that serves for
computing extrinsic LLR values corresponding to coded and interleaved
bits mapped over symbol $s_{t,i}$. Let $\boldsymbol{{\varphi}}_{t,i}$
denote the vector of \emph{a priori} LLRs of bits corresponding to
symbol $s_{t,i}$, and available at the input of the soft combiner
at a particular turbo iteration. $\sigma_{t,i}^{2}\triangleq\mathbb{{E}}\left[\left|s_{t,i}\right|^{2}\mid\boldsymbol{{\varphi}}_{t,i}\right]-\left\{ \mathbb{{E}}\left[s_{t,i}\mid\boldsymbol{{\varphi}}_{t,i}\right]\right\} ^{2}$
denotes the conditional variance of $s_{t,i}$. By invoking either
the orthogonal projection theorem or Lagrangian methods, and using
(\ref{eq:Block_diag_Mat_SingleRound}) and (\ref{eq:Cov_AllRounds}),
soft MMSE-based packet combining at ARQ round $k$, can be performed
in the frequency domain as, \begin{equation}
\mathbf{z}_{f}^{\left(k\right)}=\boldsymbol{{\Gamma}}^{\left(k\right)}\underline{\mathbf{y}}_{f}^{\left(k\right)}-\boldsymbol{{\Omega}}^{\left(k\right)}\bar{\mathbf{s}}_{f},\label{eq:Forward_Backward_FD}\end{equation}
where $\mathbf{z}_{f}^{\left(k\right)}$ is the DFT of $\mathbf{z}^{\left(k\right)}\triangleq\left[z_{1,0}^{\left(k\right)},\cdots,z_{N_{T},T-1}^{\left(k\right)}\right]^{\top}\in\mathbb{C}^{N_{T}T}$,
i.e., $\mathbf{z}^{\left(k\right)}=\mathbf{U}_{T,N_{T}}^{H}\mathbf{z}_{f}^{\left(k\right)}$,
$\bar{\mathbf{s}}_{f}\in\mathbb{C}^{N_{T}T}$ denotes the DFT of the
soft symbol vector $\bar{\mathbf{s}}\triangleq\mathbb{{E}}\left[\mathbf{s}\mid\boldsymbol{{\varphi}}_{t,i}:\forall\left(t,i\right)\right]$,
and \begin{equation}
\begin{cases}
\boldsymbol{{\Gamma}}^{\left(k\right)} & =\mathrm{diag}\left\{ \underline{\mathbf{\Lambda}}_{0}^{\left(k\right)^{H}}\mathbf{B}_{0}^{\left(k\right)^{-1}},\cdots,\underline{\mathbf{\Lambda}}_{T-1}^{\left(k\right)^{H}}\mathbf{B}_{T-1}^{\left(k\right)^{-1}}\right\} ,\\
\boldsymbol{{\Omega}}^{\left(k\right)} & =\mathbf{C}^{\left(k\right)}-\mathbf{I}_{T}\otimes\mathrm{diag}\left\{ (\tilde{\mathbf{C}}^{\left(k\right)})_{1,1},\cdots,(\tilde{\mathbf{C}}^{\left(k\right)})_{N_{T},N_{T}}\right\} ,\end{cases}\label{eq:Forward_Backward_Expr}\end{equation}
\begin{equation}
\begin{cases}
\mathbf{B}_{i}^{\left(k\right)} & =\underline{\mathbf{\Lambda}}_{i}^{\left(k\right)}\boldsymbol{{\tilde{\Sigma}}}\,\underline{\mathbf{\Lambda}}_{i}^{\left(k\right)^{H}}+\boldsymbol{{\Xi}}_{k},\\
\mathbf{C}_{i}^{\left(k\right)} & =\underline{\mathbf{\Lambda}}_{i}^{\left(k\right)^{H}}\mathbf{B}_{i}^{\left(k\right)^{-1}}\underline{\mathbf{\Lambda}}_{i}^{\left(k\right)},\\
\mathbf{C}^{\left(k\right)} & \triangleq\mathrm{diag}\left\{ \mathbf{C}_{0}^{\left(k\right)},\cdots,\mathbf{C}_{T-1}^{\left(k\right)}\right\} ,\\
\boldsymbol{{\Sigma}}_{i} & \triangleq\mathrm{diag}\left\{ \sigma_{1,i}^{2},\cdots,\sigma_{N_{T},i}^{2}\right\} \mathbb{\in R}^{N_{T}\times N_{T}}.\end{cases}\label{eq:Forward_Backward_Expr_needed_Mats}\end{equation}
Matrices $\tilde{\mathbf{C}}^{\left(k\right)}$ and $\boldsymbol{{\tilde{\Sigma}}}$
denote time averages of $\mathbf{C}_{i}^{\left(k\right)}$ and $\boldsymbol{{\Sigma}}_{i}$,
respectively, as defined in Section \ref{sec:Introduction}. The input
for the soft demapper can be extracted from $\mathbf{z}_{f}^{\left(k\right)}$
as $z_{t,i}^{\left(k\right)}=\mathbf{e}_{t,i}^{\top}\mathbf{U}_{T,N_{T}}^{H}\mathbf{z}_{f}^{\left(k\right)}$
where $\mathbf{e}_{t,i}$ is the $\left(iN_{T}+t\right)$th vector
of the canonical basis. As it can be seen from the forward--backward
filtering structure in (\ref{eq:Forward_Backward_FD}), the frequency
domain MMSE filter explicitly cancels soft CAI and ISI while it only
requires the covariance of unknown CCI plus noise. Note that both
Propositions \ref{pro:MultiRound_Cov} and \ref{pro:Cov_singleRound}
are used to derive (\ref{eq:Forward_Backward_FD}).%
\begin{table*}[t]
\centering{}\caption{\label{tab:SummaryAlgo}Summary of Frequency Domain MMSE-Based Turbo
Packet Combining in the Presence of CCI }
\begin{tabular}{lllll}
\hline 
 & \multicolumn{4}{l}{}\tabularnewline
\textbf{\small 0.} & \multicolumn{4}{l}{\textbf{\small Initialization}}\tabularnewline
 & \multicolumn{4}{l}{{\small Initialize \{$\mathbf{D}_{i}^{\left(0\right)}$\}$\,{}_{i=0}^{T-1}$
and $\mathbf{\underline{\tilde{y}}}_{f}^{\left(0\right)}$ with $\mathbf{0}_{N_{T}\times N_{T}}$
and $\mathbf{0}_{N_{T}\times1}$, respectively.}}\tabularnewline
\textbf{\small 1.} & \multicolumn{4}{l}{\textbf{\small Packet Combining at ARQ round $\boldsymbol{{k}}$}}\tabularnewline
 & \textbf{\small 1.1.} & \multicolumn{3}{l}{{\small Compute the DFT of CSI and received signals at round $k$,
i.e., $\mathbf{\Lambda}_{0}^{\left(k\right)},\cdots,\mathbf{\Lambda}_{T-1}^{\left(k\right)}$
and $\mathbf{y}_{f}^{\left(k\right)}$, respectively.}}\tabularnewline
 & \textbf{\small 1.2.} & \multicolumn{3}{l}{{\small For each turbo iteration}}\tabularnewline
 &  & \textbf{\small 1.2.1.} & \multicolumn{2}{l}{{\small Compute the soft symbol vector $\bar{\mathbf{s}}$ and variances
$\sigma_{t,i}^{2}$, then deduce the DFT $\bar{\mathbf{s}}_{f}$ and
$\boldsymbol{\tilde{{\Sigma}}}$.}}\tabularnewline
 &  & \textbf{\small 1.2.2.} & \multicolumn{2}{l}{{\small Estimate the CCI plus noise covariance $\mathbf{\Theta}_{k}$
at round $k$ using (\ref{eq:cov_estimate_round_k}), then compute
$\mathbf{\Theta}_{k}^{-1}$.}}\tabularnewline
 &  & \textbf{\small 1.2.3.} & \multicolumn{2}{l}{{\small Compute \{$\mathbf{\Lambda}_{i}^{\left(k\right)^{H}}\mathbf{\Theta}_{k}^{-1}\mathbf{\Lambda}_{i}^{\left(k\right)}$\}$\,{}_{i=0}^{T-1}$
and $\mathbf{\Lambda}^{\left(k\right)^{H}}(\mathbf{I}_{T}\otimes\mathbf{\Theta}_{k}^{-1})\mathbf{y}_{f}^{\left(k\right)}$. }}\tabularnewline
 &  & \textbf{\small 1.2.4.} & \multicolumn{2}{l}{{\small Deduce \{$\mathbf{D}_{i}^{\left(k\right)}$\}$\,{}_{i=0}^{T-1}$
and $\mathbf{\underline{\tilde{y}}}_{f}^{\left(k\right)}$ using recursions
(\ref{eq:recursion_Di(k)}) and (\ref{eq:recursion2}), respectively.}}\tabularnewline
 &  & \textbf{\small 1.2.5.} & \multicolumn{2}{l}{{\small Compute the inverses of \{$\boldsymbol{{\tilde{\Sigma}}}+\mathbf{D}_{i}^{\left(k\right)}$\}$\,{}_{i=0}^{T-1}$,
then construct \{$\mathbf{C}_{i}^{\left(k\right)}$\}$\,{}_{i=0}^{T-1}$
and $\mathbf{F}^{\left(k\right)}$ using (\ref{eq:Matrix_Ci(k)_2ndExpr}) }}\tabularnewline
 &  &  & \multicolumn{2}{l}{{\small and (\ref{eq:EVD}), respectively. }}\tabularnewline
 &  & \textbf{\small 1.2.6.} & \multicolumn{2}{l}{{\small Compute forward and backward parts $\boldsymbol{{\Gamma}}^{\left(k\right)}\underline{\mathbf{y}}_{f}^{\left(k\right)}$
and $\boldsymbol{{\Omega}}^{\left(k\right)}\bar{\mathbf{s}}_{f}$
using (\ref{eq:Forward_Filtering}) and (\ref{eq:Forward_Backward_Expr}),
respectively. }}\tabularnewline
 &  & \textbf{\small 1.2.7.} & \multicolumn{2}{l}{{\small Calculate extrinsic LLRs.}}\tabularnewline
 &  & \textbf{\small 1.2.8.} & \multicolumn{2}{l}{{\small Perform SISO decoding.}}\tabularnewline
 & \textbf{\small 1.3.} & \textbf{\small 1.2.9.} & \multicolumn{2}{l}{{\small End }\textbf{\small 1.2.}}\tabularnewline
 & \textbf{\small 1.4.} & \multicolumn{3}{l}{{\small If {}``frame error'' then store \{$\mathbf{D}_{i}^{\left(k\right)}$\}$\,{}_{i=0}^{T-1}$
and $\mathbf{\underline{\tilde{y}}}_{f}^{\left(k\right)}$ computed
at the last turbo iteration, and send {}``NACK'',}}\tabularnewline
 &  & \multicolumn{3}{l}{{\small Otherwise, send {}``ACK''.}}\tabularnewline
 &  &  &  & \tabularnewline
\hline
\end{tabular}
\end{table*}

To obtain estimates of unknown CCI plus noise covariance matrices
$\mathbf{\Theta}_{1},\cdots,\mathbf{\Theta}_{k}$, required by (\ref{eq:Forward_Backward_Expr_needed_Mats}),
let us consider the single-round frequency domain communication model
(\ref{eq:FD_CommModel_oneARQround}). Proposition \ref{pro:Cov_FD_CCI}
indicates that the covariance of $\mathbf{w}_{f}^{\left(k\right)}$
is $\underline{\mathbf{\Theta}}_{k}=\mathbf{I}_{T}\otimes\mathbf{\Theta}_{k}.$
Therefore, with respect to the block diagonal structure of (\ref{eq:FD_CommModel_oneARQround}),
unknown CCI plus noise covariance $\mathbf{\Theta}_{k}$ can directly
be estimated in the frequency domain at each turbo iteration, with
the aid of \emph{a priori} LLRs, according to the following average,
\begin{equation}
\mathbf{\Theta}_{k}=\frac{1}{T}\sum_{i=0}^{T-1}\left\{ \mathbf{y}_{f_{i}}^{\left(k\right)}-\mathbf{\Lambda}_{i}^{\left(k\right)}\bar{\mathbf{s}}_{f_{i}}\right\} \left\{ \mathbf{y}_{f_{i}}^{\left(k\right)}-\mathbf{\Lambda}_{i}^{\left(k\right)}\bar{\mathbf{s}}_{f_{i}}\right\} ^{H}.\label{eq:cov_estimate_round_k}\end{equation}
$\mathbf{y}_{f_{i}}^{\left(k\right)}$ and $\bar{\mathbf{s}}_{f_{i}}$
denote the DFTs of $\mathbf{y}^{\left(k\right)}$ and $\mathbf{s}$
at frequency bin $i$, respectively, i.e., $\mathbf{y}_{f_{i}}^{\left(k\right)}=\mathbf{E}_{i,N_{R}}\mathbf{y}_{f}^{\left(k\right)}$
and $\bar{\mathbf{s}}_{f_{i}}=\mathbf{E}_{i,N_{T}}\bar{\mathbf{s}}_{f}$.
Covariance matrices $\mathbf{\Theta}_{1},\cdots,\mathbf{\Theta}_{k-1}$
are similarly estimated at ARQ rounds $1,\cdots,k-1$, respectively,
and correspond to estimates obtained at the last turbo iteration.
In other words, when the decoding outcome is erroneous, a NACK message
is fed back to the transmitter, and the unknown CCI plus noise covariance
estimate obtained at the last iteration is saved in the receiver to
help perform packet combining at the next ARQ round.

\subsection{Implementation Aspects\label{sub:Implementation}}

We first provide an efficient implementation of the proposed scheme
since turbo combining requires at each turbo iteration the computation
of matrix inverses $\mathbf{B}_{0}^{\left(k\right)^{-1}},\cdots,\mathbf{B}_{T-1}^{\left(k\right)^{-1}}\in\mathbb{C}^{kN_{R}\times kN_{R}}$
given by (\ref{eq:Forward_Backward_Expr_needed_Mats}). Second, we
analyze the computational complexity and memory requirements of the
proposed implementation algorithm.

\subsubsection{An Efficient Implementation Algorithm}

The special structure of the frequency domain ARQ channel matrix (\ref{eq:CSI_DFT_allRounds})
together with the matrix inversion lemma \cite{Haykin} allow us to
express the inverse of $\mathbf{B}_{i}^{\left(k\right)}$ as, \begin{equation}
\mathbf{B}_{i}^{\left(k\right)^{-1}}=\boldsymbol{{\Xi}}_{k}^{-1}-\boldsymbol{{\Xi}}_{k}^{-1}\underline{\mathbf{\Lambda}}_{i}^{\left(k\right)}\left(\boldsymbol{{\tilde{\Sigma}}}+\mathbf{D}_{i}^{\left(k\right)}\right)^{-1}\underline{\mathbf{\Lambda}}_{i}^{\left(k\right)^{H}}\boldsymbol{{\Xi}}_{k}^{-1},\label{eq:Inversematrix_Bi(k)}\end{equation}
where $\mathbf{D}_{i}^{\left(k\right)}$ is obtained according to
the following recursion, \begin{equation}
\begin{cases}
\mathbf{D}_{i}^{\left(k\right)} & =\mathbf{D}_{i}^{\left(k-1\right)}+\mathbf{\Lambda}_{i}^{\left(k\right)^{H}}\mathbf{\Theta}_{k}^{-1}\mathbf{\Lambda}_{i}^{\left(k\right)},\\
\mathbf{D}_{i}^{\left(0\right)} & =\mathbf{0}_{N_{T}\times N_{T}}.\end{cases}\label{eq:recursion_Di(k)}\end{equation}
Therefore, matrices $\mathbf{C}_{0}^{\left(k\right)},\cdots,\mathbf{C}_{T-1}^{\left(k\right)}$
are simply computed as, \begin{equation}
\mathbf{C}_{i}^{\left(k\right)}=\mathbf{D}_{i}^{\left(k\right)}-\mathbf{D}_{i}^{\left(k\right)}\left(\boldsymbol{{\tilde{\Sigma}}}+\mathbf{D}_{i}^{\left(k\right)}\right)^{-1}\mathbf{D}_{i}^{\left(k\right)},\label{eq:Matrix_Ci(k)_2ndExpr}\end{equation}
while the forward filtering part of (\ref{eq:Forward_Backward_FD})
is calculated at each ARQ round $k$ as, \begin{equation}
\boldsymbol{{\Gamma}}^{\left(k\right)}\underline{\mathbf{y}}_{f}^{\left(k\right)}=\mathbf{F}^{\left(k\right)}\mathbf{\underline{\tilde{y}}}_{f}^{\left(k\right)},\label{eq:Forward_Filtering}\end{equation}
where\begin{multline}
\mathbf{F}^{\left(k\right)}=\mathrm{diag}\left\{ \mathbf{I}_{N_{T}}-\mathbf{D}_{0}^{\left(k\right)}\left(\boldsymbol{{\tilde{\Sigma}}}+\mathbf{D}_{0}^{\left(k\right)}\right)^{-1},\cdots,\right.\,\,\,\,\,\,\,\,\,\,\,\,\,\,\,\,\\
\left.\mathbf{I}_{N_{T}}-\mathbf{D}_{T-1}^{\left(k\right)}\left(\boldsymbol{{\tilde{\Sigma}}}+\mathbf{D}_{T-1}^{\left(k\right)}\right)^{-1}\right\} ,\label{eq:Matrix_F(k)}\end{multline}
 and $\mathbf{\underline{\tilde{y}}}_{f}^{\left(k\right)}$ is given
by the following recursion, \begin{equation}
\begin{cases}
\mathbf{\underline{\tilde{y}}}_{f}^{\left(k\right)} & =\mathbf{\underline{\tilde{y}}}_{f}^{\left(k-1\right)}+\mathbf{\Lambda}^{\left(k\right)^{H}}(\mathbf{I}_{T}\otimes\mathbf{\Theta}_{k}^{-1})\mathbf{y}_{f}^{\left(k\right)},\\
\mathbf{\underline{\tilde{y}}}_{f}^{\left(0\right)} & =\mathbf{0}_{N_{T}\times1}.\end{cases}\label{eq:recursion2}\end{equation}
 The proposed turbo packet combining algorithm is summarized in Table
\ref{tab:SummaryAlgo}. Note that, during the first iteration of round
$k$, the anti-causal parts in recursions (\ref{eq:recursion_Di(k)})
and (\ref{eq:recursion2}), i.e., $\mathbf{D}_{i}^{\left(k-1\right)}$
and $\mathbf{\underline{\tilde{y}}}_{f}^{\left(k-1\right)}$, respectively,
correspond to the output of these recursions at the last iteration
of previous round $k-1$.

\subsubsection{Computational Complexity and Memory Requirements}

The proposed recursive implementation algorithm avoids storing received
signals and CFRs corresponding to multiple ARQ rounds. It also prevents
the computation of $kN_{R}\times kN_{R}$ matrix inverses. This dramatically
reduces the implementation cost since the complexity order of directly
computing $\mathbf{B}_{i}^{\left(k\right)^{-1}}$ is cubic against
$kN_{R}$, and is greatly increased from round to round. In the following,
we analyze both the complexity and memory requirements of the proposed
scheme, and compare them with those of the LLR-level combining technique
\footnote{In this paper, LLR-level combining refers to the iterative (turbo)
packet combining and SISO decoding receiver, where transmissions corresponding
to $k$ ARQ rounds are separately turbo equalized using $k$ frequency
domain MMSE soft equalizers. To perform packet combining at each iteration
of ARQ round $k$, extrinsic LLR values generated by the soft MMSE
equalizer at round $k$ and those obtained at the last iteration of
previous rounds $1,\cdots,k-1$ are added together, then SISO decoding
is performed. %
}. %
\begin{table*}[t]
\begin{centering}
\caption{\label{tab:SummaryImplementation}Summary of Memory and Arithmetic
Additions Required by the Proposed and LLR-Level Combining Schemes,
and Relative Cost Evaluation}

\par\end{centering}

\centering{}\begin{tabular}{lccc}
\hline 
\textbf{\footnotesize Combining scheme} & \textbf{\footnotesize Memory} & \textbf{\footnotesize Arithmetic Additions} & \begin{tabular}{c}
\textbf{\footnotesize Relative Costs }\tabularnewline
\textbf{\footnotesize }\begin{tabular}{lll}
\hline 
{\footnotesize QPSK} & {\footnotesize 8-PSK~~} & {\footnotesize 16-QAM}\tabularnewline
\end{tabular}\tabularnewline
\end{tabular}\tabularnewline
\hline
\hline 
\begin{tabular}{l}
{\footnotesize LLR-Level}\tabularnewline
{\footnotesize Proposed}\tabularnewline
\end{tabular} & \begin{tabular}{c}
\textbf{\footnotesize $TN_{T}\log_{2}\left|\mathcal{S}\right|$}\tabularnewline
\textbf{\footnotesize $2TN_{T}\left(N_{T}+1\right)$}\tabularnewline
\end{tabular} & \begin{tabular}{c}
\textbf{\footnotesize $TN_{T}N_{it}\left(K-1\right)\log_{2}\left|\mathcal{S}\right|$}\tabularnewline
\textbf{\footnotesize $2TN_{T}N_{it}\left(K-1\right)\left(N_{T}+1\right)$}\tabularnewline
\end{tabular} & \begin{tabular}{ccc}
{\footnotesize ~~$N_{T}$} & {\footnotesize $\frac{2}{3}N_{T}-\frac{1}{3}$} & {\footnotesize ~$N_{T}-\frac{1}{2}$}\tabularnewline
\end{tabular}\tabularnewline
\hline
\end{tabular}
\end{table*}
\begin{figure*}[!t]
\normalsize 
\vspace*{6pt}
\hrulefill
\vspace*{4pt}
\setcounter{mytempeqncnt}{\value{equation}} 
\setcounter{equation}{41}
\begin{equation}
\mathbf{R}_{N_{R}}=\left[\begin{array}{ccc} 1 &  & \delta_{\mathrm{Rx}}\\  & \ddots\\ \delta_{\mathrm{Rx}} &  & 1\end{array}\right]_{N_{R}\times N_{R}},\,\,\,\,\,\mathbf{R}_{N'_{T}}=\left[\begin{array}{ccc} 1 &  & \delta_{\mathrm{Tx}}\\  & \ddots\\ \delta_{\mathrm{Tx}} &  & 1\end{array}\right]_{N'_{T}\times N'_{T}},\label{eq:CorrMat_Tx_Rx_CorrCh}\end{equation}
\setcounter{equation}{38} 
\hrulefill 
\vspace*{4pt}
\end{figure*}

First, note that in the case of LLR-level packet combining, frequency
domain MMSE equalization is separately performed for each ARQ round.
Therefore, $T$ inversions of $N_{R}\times N_{R}$ matrices are required
to compute the forward and backward filters. Since in general it is
required to have more receive than transmit antennas, especially when
CCI is present in the system, an implementation similar to that introduced
in the previous subsection is beneficial because only $T$ inversions
of $N_{T}\times N_{T}$ matrices will be required. In this case, the
two variables in recursions (\ref{eq:recursion_Di(k)}) and (\ref{eq:recursion2})
are computed at ARQ round $k$ as, $\mathbf{D}_{i}^{'\left(k\right)}=\mathbf{\Lambda}_{i}^{\left(k\right)^{H}}\mathbf{\Theta}_{k}^{-1}\mathbf{\Lambda}_{i}^{\left(k\right)}\in\mathbb{C}^{N_{T}\times N_{T}},$
and $\mathbf{\underline{\tilde{y}}}_{f}^{'\left(k\right)}=\mathbf{\Lambda}^{\left(k\right)^{H}}(\mathbf{I}_{T}\otimes\mathbf{\Theta}_{k}^{-1})\mathbf{y}_{f}^{\left(k\right)}\in\mathbb{C}^{N_{T}},$
while all the other steps in Table \ref{tab:SummaryAlgo} remain the
same (including the CCI plus noise covariance estimation procedure
in step 1.2.2.). Therefore, by letting $N_{it}$ denote the number
of turbo iterations at each ARQ round, both combining algorithms have
similar computational complexities since the proposed scheme and the
LLR-level scheme require at most $C_{new\, scheme}^{+}=2TN_{T}N_{it}\left(K-1\right)\left(N_{T}+1\right)$
and $C_{LLR-level}^{+}=TN_{T}N_{it}\left(K-1\right)\log_{2}\left|\mathcal{S}\right|$
arithmetic additions to perform (\ref{eq:recursion_Di(k)}) and (\ref{eq:recursion2}),
and to combine LLRs corresponding to multiple rounds, respectively. 

LLR-level packet combining performs the combination of extrinsic LLR
values generated by frequency domain soft equalizers at multiple ARQ
rounds. Therefore, a storage capacity of $TN_{T}\log_{2}\left|\mathcal{S}\right|$
real values is required to store accumulated LLR values corresponding
to all ARQ rounds. The proposed scheme combines multiple transmissions
at the signal level using signals and CFRs corresponding to all ARQ
rounds, without being required to be explicitly stored in the receiver.
This is performed with the aid of the two variables $\mathbf{D}_{i}^{\left(k\right)}$
and $\mathbf{\underline{\tilde{y}}}_{f}^{\left(k\right)}$ in recursions
(\ref{eq:recursion_Di(k)}) and (\ref{eq:recursion2}), respectively.
This translates into a memory size of $2TN_{T}\left(N_{T}+1\right)$
real values. Therefore, the computational complexity and storage requirements
are less sensitive to the ARQ delay. The technique requires only a
few more additions and a bit more memory compared to LLR-level combining.
Table \ref{tab:SummaryImplementation} summarizes implementation requirements
and reports the relative costs %
\footnote{Relative costs refer to the relative number of arithmetic additions
$\triangle C$ and memory $\triangle M$ required by the proposed
scheme compared to LLR-level combining. With respect to storage requirements
and number of arithmetic additions in Table \ref{tab:SummaryImplementation},
we have $\triangle C=\triangle M=2\frac{N_{T}+1}{\log_{2}\left|\mathcal{S}\right|}$.%
} for some modulation schemes.

\section{Performance Evaluation\label{sec:Performance-Evaluation}}

\subsection{Asymptotic Performance Analysis\label{sub:Asymptotic-Analysis}}

In the following, we provide a frame-basis analysis where we derive
system conditions under which perfect CCI cancellation holds. We suppose
that the interferer CSI is perfectly known, and investigate the influence
of its channel properties on the interference cancellation capability
of the proposed packet combining scheme in the high SNR regime. 
\begin{thm}
\label{thm:CCI_supress_prop}We consider a CCI-limited MIMO ARQ system
with $N_{T}$ transmit and $N_{R}$ receive antennas, and ARQ delay
$K$. Let $\boldsymbol{{\Theta}}_{k}^{\mathrm{CCI}}$ denote the CCI
covariance at ARQ round $k=1,\cdots,K$, i.e., the covariance of the
global noise at the receiver is $\boldsymbol{{\Theta}}_{k}=\boldsymbol{{\Theta}}_{k}^{\mathrm{CCI}}+\sigma^{2}\mathbf{I}_{N_{R}}$,
and $\rho_{k}$ be the rank of $\boldsymbol{{\Theta}}_{k}^{\mathrm{CCI}}$.
We assume perfect LLR feedback from the SISO decoder. The frequency
domain soft MMSE packet combiner provides perfect CCI suppression
for asymptotically high SNR if \begin{equation}
\sum_{u=1}^{k}\rho_{u}<kN_{R}-N_{T}.\label{eq:Rank-Condition}\end{equation}
\end{thm}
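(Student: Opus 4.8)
The claim is that perfect CCI suppression holds asymptotically (as $\sigma^2 \to 0$) provided the accumulated rank condition $\sum_{u=1}^{k}\rho_u < kN_R - N_T$ holds. Let me think about what "perfect CCI suppression" means operationally in this MMSE framework and what drives the proof.

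The key objects are the frequency-domain filter matrices $\mathbf{B}_i^{(k)} = \underline{\mathbf{\Lambda}}_i^{(k)}\boldsymbol{\tilde{\Sigma}}\underline{\mathbf{\Lambda}}_i^{(k)H} + \boldsymbol{\Xi}_k$ and the resulting combiner decision. Since we assume perfect LLR feedback, the soft symbols $\bar{\mathbf{s}}$ equal the true symbols and the conditional variances $\sigma_{t,i}^2 \to 0$, so $\boldsymbol{\tilde{\Sigma}} \to \mathbf{0}$ in the genie-aided limit. The noise-plus-CCI covariance decomposes per round as $\boldsymbol{\Theta}_k = \boldsymbol{\Theta}_k^{\mathrm{CCI}} + \sigma^2 \mathbf{I}_{N_R}$, hence $\boldsymbol{\Xi}_k = \mathrm{diag}\{\boldsymbol{\Theta}_1,\dots,\boldsymbol{\Theta}_k\}$ has a structured eigenstructure: on the CCI subspace the eigenvalues are $O(1)$, while on the complementary subspace they are exactly $\sigma^2$.

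**Proof plan**

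First I would reduce the problem to a single frequency bin $i$, since by the block-diagonalization (\ref{eq:Block_diagnalization}) and Proposition \ref{pro:MultiRound_Cov} the combiner decouples across bins; perfect suppression must hold at every bin simultaneously, so it suffices to analyze one generic $\underline{\mathbf{\Lambda}}_i^{(k)} \in \mathbb{C}^{kN_R \times N_T}$ against $\boldsymbol{\Xi}_k$. Second, I would set $\boldsymbol{\tilde{\Sigma}} = \mathbf{0}$ (genie/perfect-feedback limit) so the relevant quantity becomes the output SINR governed by $\mathbf{C}_i^{(k)} = \underline{\mathbf{\Lambda}}_i^{(k)H}\boldsymbol{\Xi}_k^{-1}\underline{\mathbf{\Lambda}}_i^{(k)}$. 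The core of the argument is to track $\boldsymbol{\Xi}_k^{-1}$ as $\sigma^2 \to 0$: decompose each round's space into the CCI subspace (dimension $\rho_u$, eigenvalues bounded away from zero) and its orthogonal complement (dimension $N_R - \rho_u$, eigenvalue $\sigma^2$). On the complement $\boldsymbol{\Xi}_k^{-1}$ blows up like $\sigma^{-2}$, and this is precisely the "clean" interference-free subspace the combiner exploits.

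Third, and this is the heart of the matter, I would show that perfect suppression (unbounded output SINR as $\sigma^2 \to 0$) is equivalent to the desired signal $\underline{\mathbf{\Lambda}}_i^{(k)}$ having a component in the aggregate CCI-free subspace that still retains full column rank $N_T$ after projecting away the CCI directions. The dimension of the combined CCI subspace across $k$ rounds is at most $\sum_{u=1}^{k}\rho_u$ (block-diagonal stacking, so ranks add), out of a total virtual-receive dimension $kN_R$. The orthogonal complement therefore has dimension at least $kN_R - \sum_u \rho_u$. For the projected channel to preserve rank $N_T$ — so that every transmit stream remains detectable with diverging SINR — this complement must have dimension strictly exceeding $N_T$, i.e. $kN_R - \sum_{u=1}^{k}\rho_u > N_T$, which rearranges to (\ref{eq:Rank-Condition}). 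I would make this rigorous by partitioning $\boldsymbol{\Xi}_k^{-1}$ via its eigendecomposition and showing $\mathbf{C}_i^{(k)} = \sigma^{-2}\,\underline{\mathbf{\Lambda}}_i^{(k)H}\mathbf{P}^{\perp}\underline{\mathbf{\Lambda}}_i^{(k)} + O(1)$, where $\mathbf{P}^{\perp}$ projects onto the CCI-free subspace; the leading term is positive definite (hence $\mathbf{C}_i^{(k)} \to \infty$, giving vanishing error) exactly when $\mathbf{P}^{\perp}\underline{\mathbf{\Lambda}}_i^{(k)}$ has rank $N_T$, which is a generic (almost-sure) consequence of the Rayleigh-fading entries of $\underline{\mathbf{\Lambda}}_i^{(k)}$ once the dimension count $kN_R - \sum_u\rho_u > N_T$ is met.

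The main obstacle I anticipate is the rank-preservation step under projection: asserting that $\mathbf{P}^{\perp}\underline{\mathbf{\Lambda}}_i^{(k)}$ retains full column rank $N_T$ requires that the projection of the (random but known) channel onto the interference-free subspace does not collapse, which is where the \emph{strict} inequality and the independence/continuity of the fading distribution are essential — a counting argument alone gives $\ge$, and one must invoke that $\underline{\mathbf{\Lambda}}_i^{(k)}$ is statistically independent of the CCI subspace (since the CCI channel is independent of the desired channel) to conclude full rank almost surely. A secondary care point is handling the $\sigma^{-2}$ scaling cleanly inside the MMSE expression $(\boldsymbol{\tilde{\Sigma}} + \mathbf{D}_i^{(k)})^{-1}$ from (\ref{eq:Matrix_Ci(k)_2ndExpr}) and verifying that the residual interference term stays $O(1)$ and is therefore dominated, so the per-stream MSE genuinely tends to zero.
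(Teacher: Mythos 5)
Your proposal is correct and shares the paper's scaffolding --- perfect-feedback limit so that $\boldsymbol{{\tilde{\Sigma}}}=\mathbf{0}$ and everything is governed by $\mathbf{G}_{i}^{\left(k\right)}=\underline{\mathbf{\Lambda}}_{i}^{\left(k\right)^{H}}\boldsymbol{{\Xi}}_{k}^{-1}\underline{\mathbf{\Lambda}}_{i}^{\left(k\right)}$, block-diagonal stacking of the per-round CCI covariances so the ranks add to $\rho=\sum_{u=1}^{k}\rho_{u}$, eigendecomposition of $\boldsymbol{{\Xi}}_{k}$ into a $\rho$-dimensional CCI subspace and its complement, and a $\sigma^{2}\rightarrow0$ expansion of $\boldsymbol{{\Xi}}_{k}^{-1}$ --- but it diverges from the paper at the decisive step, and your version is the sounder one. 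The paper expands $\boldsymbol{{\Xi}}_{k}^{-1}$ as in (\ref{eq:TaylorExpansion}) and then asserts that, when $\rho<kN_{R}-N_{T}$, the columns of $\mathbf{P}_{\rho}$ lie in the kernel of $\underline{\mathbf{\Lambda}}_{i}^{\left(k\right)^{H}}$, i.e. $\underline{\mathbf{\Lambda}}_{i}^{\left(k\right)^{H}}\mathbf{P}_{\rho}=\boldsymbol{{0}}_{N_{T}\times\rho}$; this exact orthogonality is what lets the paper conclude $\mathbf{G}_{i}^{\left(k\right)}=\sigma^{-2}\underline{\mathbf{\Lambda}}_{i}^{\left(k\right)^{H}}\underline{\mathbf{\Lambda}}_{i}^{\left(k\right)}+\mathcal{O}\left(\sigma^{2}\right)$ and hence that the asymptotic SINR equals the full CCI-free matched-filter SNR in (\ref{eq:Final_SINR_Expr}). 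But that orthogonality does not follow from the dimension count: the CCI eigen-subspace and the column space of the desired channel are \emph{independent} random subspaces, and for continuous fading they are orthogonal with probability zero --- counting dimensions only guarantees that the two subspaces can intersect trivially, not that one sits inside the orthogonal complement of the other. Your route keeps the projector instead: writing $\boldsymbol{{\Xi}}_{k}^{-1}=\mathbf{P}_{\rho}\left(\boldsymbol{{\Upsilon}}+\sigma^{2}\mathbf{I}_{\rho}\right)^{-1}\mathbf{P}_{\rho}^{H}+\sigma^{-2}\mathbf{P}^{\perp}$ with $\mathbf{P}^{\perp}=\mathbf{I}_{kN_{R}}-\mathbf{P}_{\rho}\mathbf{P}_{\rho}^{H}$, you obtain $\mathbf{G}_{i}^{\left(k\right)}=\sigma^{-2}\underline{\mathbf{\Lambda}}_{i}^{\left(k\right)^{H}}\mathbf{P}^{\perp}\underline{\mathbf{\Lambda}}_{i}^{\left(k\right)}+\mathcal{O}\left(1\right)$, and then prove the leading term is positive definite by showing $\mathbf{P}^{\perp}\underline{\mathbf{\Lambda}}_{i}^{\left(k\right)}$ keeps full column rank almost surely, using precisely the independence of the desired channel from the CCI subspace together with $kN_{R}-\rho>N_{T}$. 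This yields unbounded post-combining SINR, i.e. perfect CCI suppression, which is all the theorem claims; the price relative to the paper's (unsupported) stronger statement is only quantitative --- the $\sigma^{-2}$ growth coefficient is the energy of the \emph{projected} channel rather than the full matched-filter energy. In short, the ``main obstacle'' you flagged (that counting gives only feasibility, and almost-sure rank preservation under projection must be argued from independence) is exactly the gap in the paper's own proof, and your treatment closes it.
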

\begin{proof}
See the Appendix.
\end{proof}
We now proceed to derive an upper bound on $\rho_{k}$, where we incorporate
the rank of the CCI fading channel. Under the assumption that CCI
symbols satisfy (\ref{eq:Symb_independency}), i.e., infinitely deep
interleaving, we get \begin{equation}
\boldsymbol{{\Theta}}_{k}^{\mathrm{CCI}}=\sum_{l'=0}^{L'-1}\mathbf{H}_{l'}^{\mathrm{CCI}^{\left(k\right)}}\mathbf{H}_{l'}^{\mathrm{CCI}^{\left(k\right)^{H}}}.\label{eq:CCI_Cov_Mat_Expr2}\end{equation}
Let us write each CCI channel matrix as \begin{equation}
\mathbf{H}_{l'}^{\mathrm{CCI}^{\left(k\right)}}=\mathbf{R}_{N_{R}}^{1/2}\mathbf{A}_{l'}^{\mathrm{CCI}^{\left(k\right)}}\mathbf{R}_{N'_{T}}^{1/2}\,\,\,\,\,\,\,\forall l',\label{eq:CCI_Channel_Gesbert}\end{equation}
where $\mathbf{A}_{l'}^{\mathrm{CCI}^{\left(k\right)}}\in\mathbb{C}^{N_{R}\times N'_{T}}$
characterizes the scattering environment between the CCI transmitter
and receiver \cite{Gesbert_Corr_Tcom02}, and $\mathbf{R}_{N_{R}}$
and $\mathbf{R}_{N'_{T}}$ are the correlation matrices controlling
the receive and transmit antenna arrays, and are in general given
by (\ref{eq:CorrMat_Tx_Rx_CorrCh}), \setcounter{equation}{42}where
$0\leq\delta_{\mathrm{Rx}},\,\delta_{\mathrm{Tx}}<1$ \cite{Sellathurai_Corr_JSAC03}.
Note that (\ref{eq:CCI_Channel_Gesbert}) corresponds to a general
model of correlated fading MIMO channels, where the scattering radii
at transmitter and receiver sides is taken into account, and $\mathbf{A}_{l'}^{\mathrm{CCI}^{\left(k\right)}}$
is not necessarily a full rank matrix, i.e., $\mathrm{rank}\left\{ \mathbf{A}_{l'}^{\mathrm{CCI}^{\left(k\right)}}\right\} \leq\min\left(N'_{T},N_{R}\right)$
\cite{Gesbert_Corr_Tcom02}. Noting that $\mathbf{R}_{N_{R}}$ and
$\mathbf{R}_{N'_{T}}$ are full rank matrices, and with respect to
the fact that CCI tap channel matrices are independent, and using
(\ref{eq:CCI_Cov_Mat_Expr2}) and (\ref{eq:CCI_Channel_Gesbert}),
we get

\begin{align}
\rho_{k} & \leq\min\left\{ N_{R},\sum_{l'=0}^{L'-1}\mathrm{rank}\left\{ \mathbf{H}_{l'}^{\mathrm{CCI}^{\left(k\right)}}\mathbf{H}_{l'}^{\mathrm{CCI}^{\left(k\right)^{H}}}\right\} \right\} \nonumber \\
 & =\min\left\{ N_{R},\sum_{l'=0}^{L'-1}\mathrm{rank}\left\{ \mathbf{A}_{l'}^{\mathrm{CCI}^{\left(k\right)}}\mathbf{R}_{N'_{T}}\mathbf{A}_{l'}^{\mathrm{CCI}^{\left(k\right)^{H}}}\right\} \right\} \nonumber \\
 & \leq\min\left\{ N_{R},\sum_{l'=0}^{L'-1}\mathrm{rank}\left\{ \mathbf{A}_{l'}^{\mathrm{CCI}^{\left(k\right)}}\right\} \right\} .\label{eq:CCI_cov_rank_UpperBound}\end{align}

A closer look at Theorem \ref{thm:CCI_supress_prop} and upper bound
(\ref{eq:CCI_cov_rank_UpperBound}) provides interesting system interpretations.
\begin{itemize}
\item \textbf{Impact of CCI Fading Channel:} First, note that the CCI cancellation
capability of the frequency domain MMSE packet combiner is related
to the CCI channel rank. When the interferer has a rank-deficient
channel matrix at a certain ARQ round, interference can completely
be removed (at subsequent rounds) if the sum-rank condition in Theorem
\ref{thm:CCI_supress_prop} is satisfied. In practice, the channel
rank can dramatically drop in the case of the so-called pinhole channel,
where the transmitter and receiver are largely separated and are surrounded
by multiple scatterers \cite{Gesbert_Corr_Tcom02}. In this scenario,
the channel can even prevent multipath from building up since the
thin air pipe connecting transmitter and receiver scatterers is very
long. For instance, in a system with $N_{R}=3$ receive and $N_{T}=2$
transmit antennas, and an unknown interferer who is experiencing one
path ($L'=1$) channel realizations with rank equal to two, CCI can
be removed at the second ARQ round because the sum-rank condition
(\ref{eq:Rank-Condition}) holds for $k\geq2$.
\item \textbf{Impact of the Number of Transmit Antennas and ARQ Delay: }Condition
(\ref{eq:Rank-Condition}) suggests how, for a given CCI channel profile,
the number of transmit antennas $N_{T}$ and ARQ rounds $K$ are chosen
to achieve perfect CCI cancellation. For instance, if transmission
is corrupted by CCI with quasi-static channel rank %
\footnote{In this case, CCI with quasi-static channel rank refers to an interferer
whose channel rank is constant over multiple ARQ rounds.%
}, and if the ARQ delay allowed by the upper layer is $K$, then only
$N_{T}<K\left(N_{R}-\rho_{0}\right)$ transmit antennas can be allocated
to the user of interest to achieve interference suppression at the
latest at ARQ round $K$, where $\rho_{0}$ is the rank of $\boldsymbol{{\Theta}}_{k}^{\mathrm{CCI}}$,
i.e., $\rho_{k}=\rho_{0}\,\,\forall k$. Increasing the ARQ delay
will relax the condition on the number of transmit antennas and therefore
allow for an increase in the diversity and/or multiplexing gains depending
on the diversity-multiplexing-delay trade-off operating point \cite{DMD_Gamal_IT_2006}.
Note that when $N'_{T}\ll N_{T}$, the CCI channel rank dramatically
drops, and therefore CCI suppression is achieved even when a short
ARQ delay $K$ is required. 
\item \textbf{Interaction with the Scheduling Mechanism: }In the case of
opportunistic communications, interference with co-channel users who
have high channel ranks can be prevented. For instance, when a retransmission
is required on the reverse link, the base station (BS) can choose
the timing of the next ARQ round in such a way that transmission simultaneously
occurs with that of a user with low channel rank. This is feasible
since the BS has complete knowledge about user CSIs in the reverse
link. The same scheduling mechanism can be used in the forward link
if all users provide the BS with feedback information about their
channel ranks. When the system suffers from CCI caused by neighboring
cells, the sum-rank condition (\ref{eq:Rank-Condition}) can be achieved
by simply increasing the number of ARQ rounds because the CCI channel
rank tends to be constant over time. 
\end{itemize}

\subsection{Numerical Results\label{sub:BLER_throughput-Performance}}

In this subsection, we provide block error rate (BLER) performance
results for the proposed combining technique. Our focus is to demonstrate
the superior performance of the introduced scheme compared to LLR-level
combining. We also evaluate BLER performance for scenarios where the
interferer has rank deficient channel matrices to corroborate the
theoretical analysis in Subsection \ref{sub:Asymptotic-Analysis}.%
\begin{figure}[t]
\noindent \begin{centering}
\includegraphics[scale=0.49]{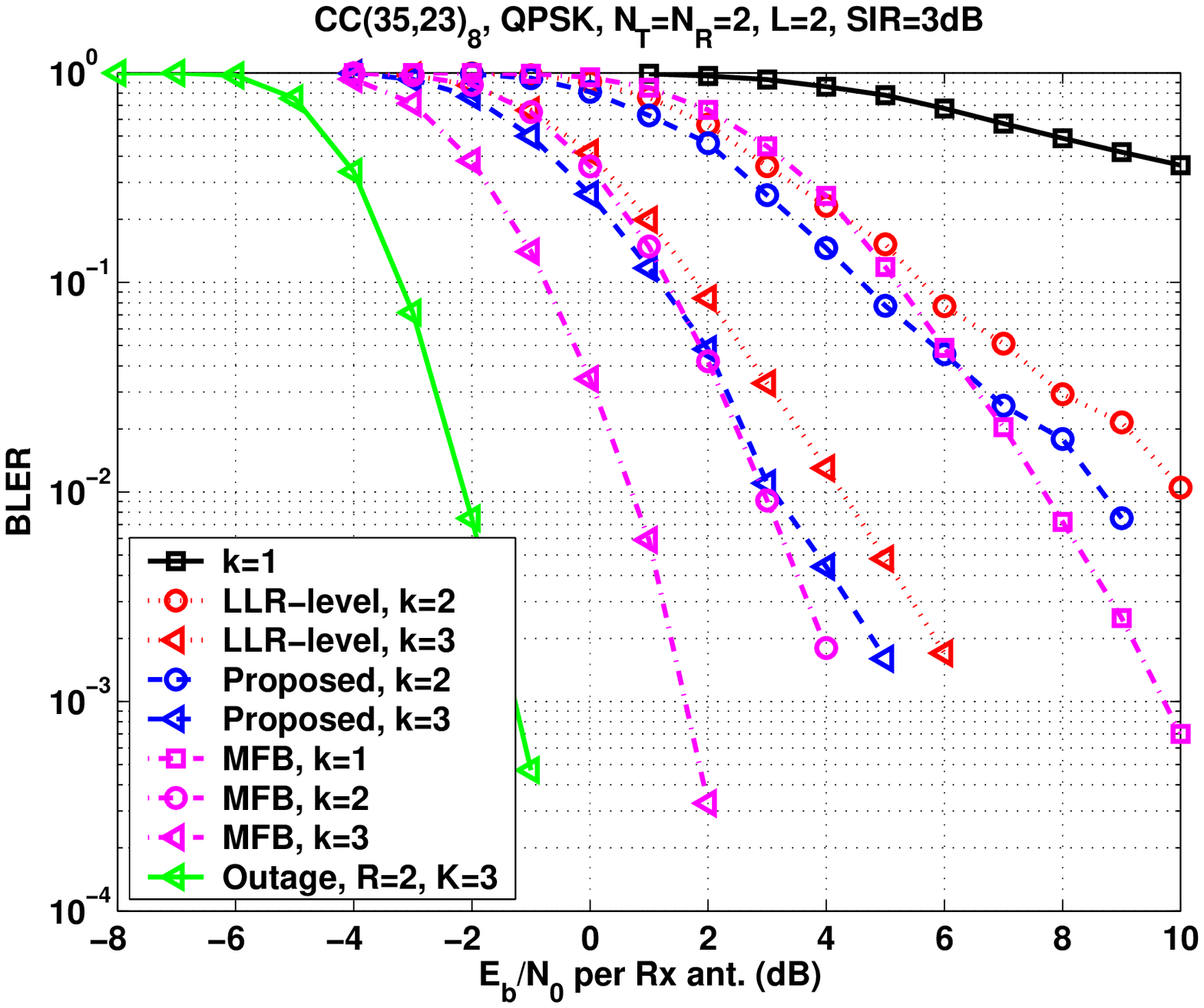}
\par\end{centering}

\caption{\label{fig:BLER_2x2_SIR3dB} BLER performance for CC $\left(35,23\right)_{8}$,
QPSK, $N_{T}=N_{R}=2$, $L=L'=2$ equal energy paths, and $\mathrm{SIR}=3$dB.}

\end{figure}
\begin{figure}[t]
\noindent \begin{centering}
\includegraphics[scale=0.49]{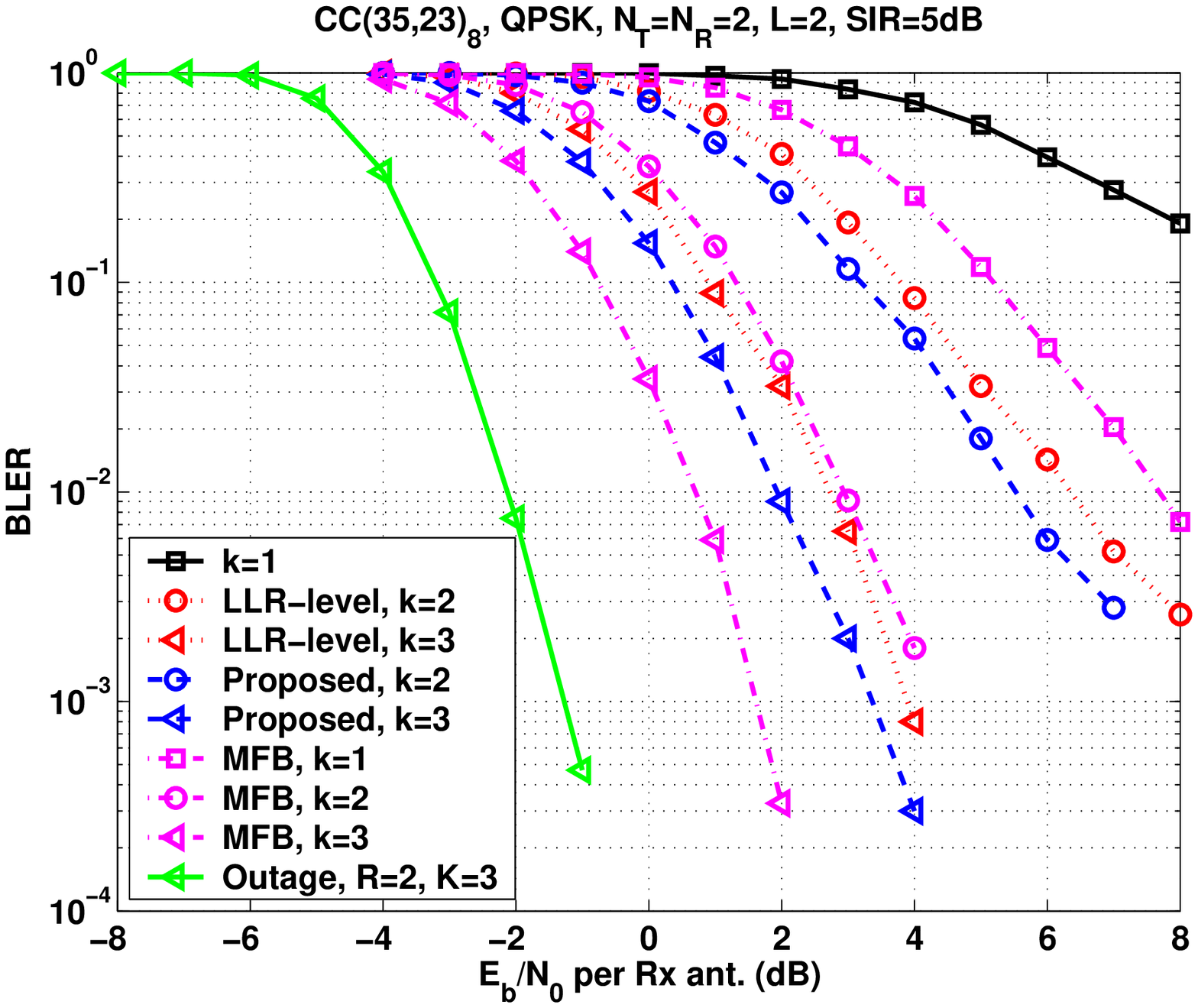}
\par\end{centering}

\caption{\label{fig:BLER_2x2_SIR5dB} BLER performance for CC $\left(35,23\right)_{8}$,
QPSK, $N_{T}=N_{R}=2$, $L=L'=2$ equal energy paths, and $\mathrm{SIR}=5$dB.}

\end{figure}

In all simulations, we consider a BICM scheme where the encoder is
a $\frac{1}{2}$-rate convolutional code with polynomial generators
$\left(35,23\right)_{8}$, and the modulation scheme is quadrature
phase shift keying (QPSK). The length of the code bit frame is $1032$
bits including tails. The ARQ delay is $K=3$, and the $E_{b}/N_{0}$
ratio appearing in all figures is the SNR per useful bit per receive
antenna. We consider a $L=2$ path MIMO-ISI channel profile where
$\sigma_{0}^{2}=\sigma_{1}^{2}=\frac{1}{2}$. In practical wireless
systems, the wireless channel may have more than two paths due to
severe frequency selective fading. In this paper, we restrict ourselves
to $L=2$ for the sake of simulation simplicity. Performance in the
case of severe frequency selective fading channels can be found in
\cite{Chafnaji_PIMRC08}. We use both the matched filter bound (MFB)
per ARQ round and the outage probability \cite{Ait-Idir_MIMOISIARQ_TCOM08}
of the CCI-free MIMO-ISI channel as absolute performance bounds to
evaluate the CCI cancellation capability and diversity order achieved
by the proposed combining scheme. The number of turbo iterations is
set to five and the Max-Log-MAP version of the maximum \emph{a posteriori}
(MAP) algorithm is used for SISO decoding.

We first investigate performance for scenarios where the user of interest
and the interferer have the same number of transmit antennas ($N_{T}=N'_{T}$)
and identical channel profiles, i.e., $L=L'$, equal power taps, and
CCI fading channel coefficients are i.i.d. In Fig. \ref{fig:BLER_2x2_SIR3dB},
we compare the BLER performance of the proposed scheme with that of
LLR-level combining for a ST--BICM code with rate $R=2$, i.e., $N_{T}=2$.
The number of receive antennas is $N_{R}=2$, and $\mathrm{SIR}=3\mathrm{dB}$.
We observe that the proposed scheme significantly outperforms LLR-level
combining. The performance gap at ARQ round $k=3$ is about $1\mathrm{dB}$
for $\mathrm{BLER}\leq10^{-2}$. Note that both combining schemes
fail to perfectly cancel CCI since performance curves tend to saturate
for high $E_{b}/N_{0}$ values. Fig. \ref{fig:BLER_2x2_SIR5dB} reports
performance of both techniques when $\mathrm{SIR}$ is increased to
$\mathrm{SIR}=5\mathrm{dB}$. In this case, the performance gap between
the two schemes is reduced. The CCI cancellation capability is also
improved as can be seen from the steeper slopes of BLER curves. In
Fig. \ref{fig:BLER_4x2_SIR5dB}, we evaluate the performance for a
high rate ST--BICM code where $R=4$, i.e., $N_{T}=4$. Only $N_{R}=2$
receive antennas are considered, and $\mathrm{SIR}=5\mathrm{dB}$.
The proposed scheme dramatically outperforms LLR-level combining,
i.e., the performance gap at ARQ round $k=3$ is about $4\mathrm{dB}$
at $7*10^{-3}$ BLER. The proposed scheme also offers higher cancellation
capability and diversity order than LLR-level combining.%
\begin{figure}[t]
\noindent \begin{centering}
\includegraphics[scale=0.49]{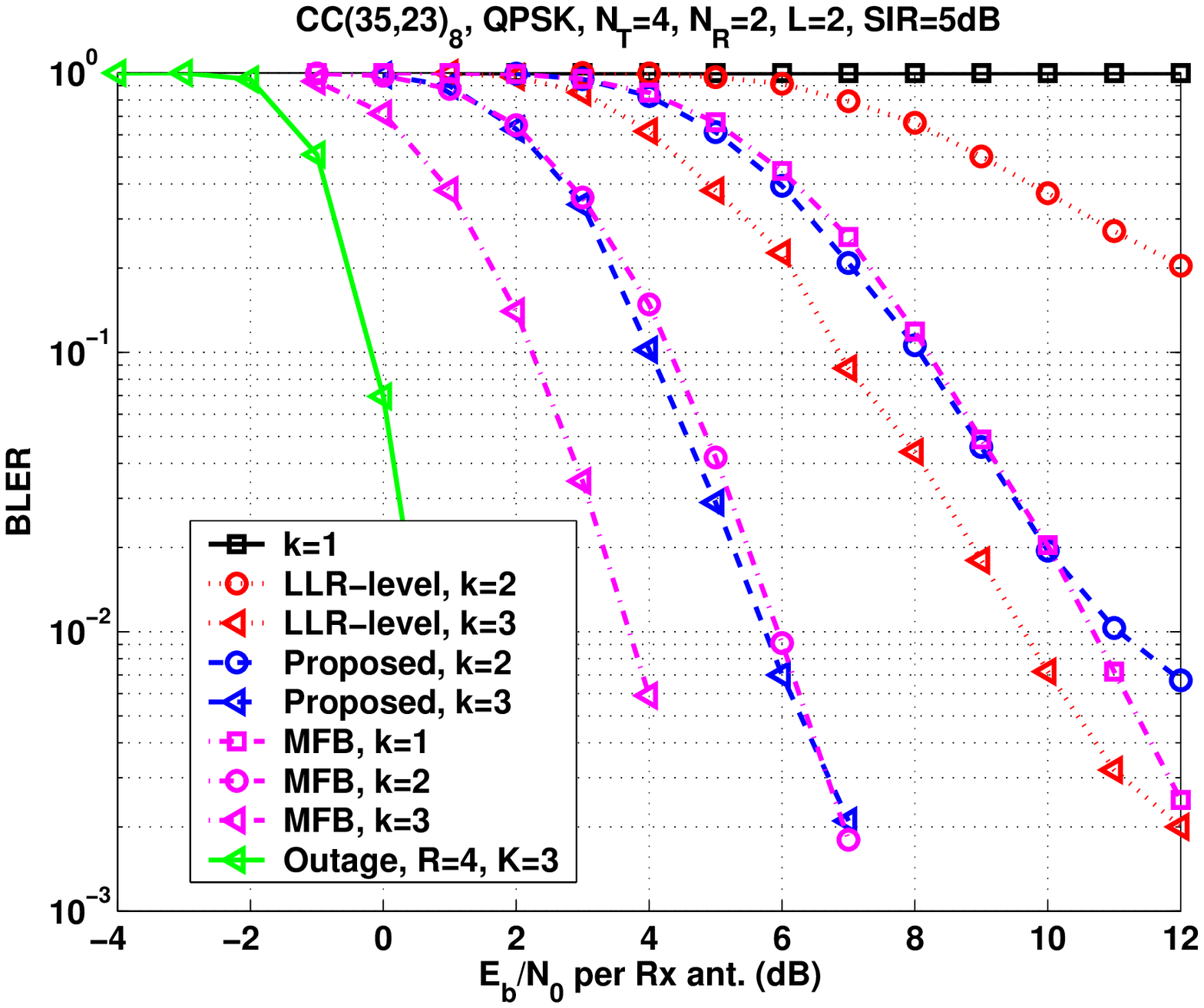}
\par\end{centering}

\caption{\label{fig:BLER_4x2_SIR5dB} BLER performance for CC $\left(35,23\right)_{8}$,
QPSK, $N_{T}=4$, $N_{R}=2$, $L=L'=2$ equal energy paths, and $\mathrm{SIR}=5$dB.}

\end{figure}
\begin{figure}[t]
\noindent \begin{centering}
\includegraphics[scale=0.49]{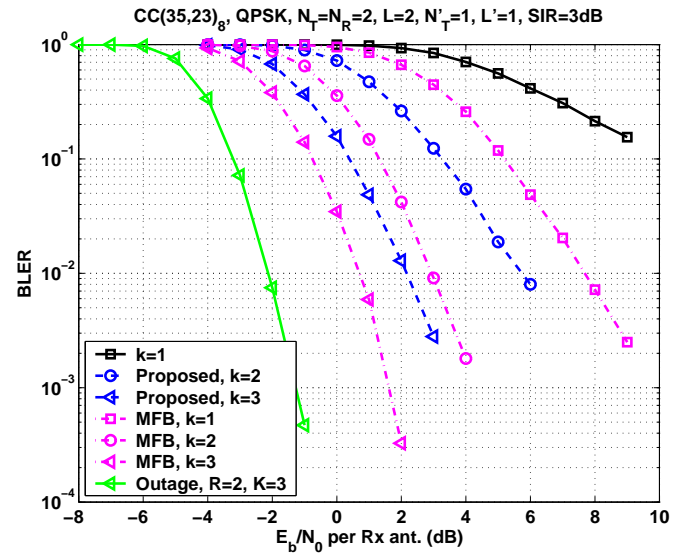}
\par\end{centering}

\caption{\label{fig:BLER_2x2_NpT1_Lp1SIR3dB} BLER performance for CC $\left(35,23\right)_{8}$,
QPSK, $N_{T}=N_{R}=2$, $L=2$ equal energy paths, $N'_{T}=1$, $L'=1$,
and $\rho_{k}=1,\, k=1,\cdots,K$, $\mathrm{SIR}=3$dB.}

\end{figure}

We now turn to scenarios where the interferer has a rank-deficient
uncorrelated MIMO channel, i.e., $\mathrm{rank}\left\{ \mathbf{A}_{l'}^{\mathrm{CCI}^{\left(k\right)}}\right\} <\min\left(N'_{T},N{}_{R}\right)\,\,\forall l'$,
$\delta_{\mathrm{Tx}}=\delta_{\mathrm{Rx}}=0$, and assume that the
rank is constant over all ARQ rounds. In Fig. \ref{fig:BLER_2x2_NpT1_Lp1SIR3dB},
we report the BLER performance of the proposed scheme for a CCI-limited
MIMO system with settings similar to Fig. \ref{fig:BLER_2x2_SIR3dB},
i.e., $N_{T}=N_{R}=2$, and $\mathrm{SIR}=3\mathrm{dB}$. The interferer
experiences flat fading, i.e., $L'=1$, and only has $N'_{T}=1$ transmit
antenna. Therefore, with respect to (\ref{eq:CCI_cov_rank_UpperBound}),
$\rho_{k}=1\,\,\forall k$. Note that in this interference scenario,
the perfect CCI cancellation condition (\ref{eq:Rank-Condition})
holds for $k\geq2$. We observe that both the CCI cancellation capability
and the diversity order of the proposed scheme are improved. The performance
gain with respect to the case of $N'_{T}=2$ and $L'=2$ is about
$1.5\mathrm{dB}$ at $3*10^{-3}$ BLER and round $k=3$, and the slope
of the BLER curve at round $k=3$ is similar to that of the MFB curve.
Fig. \ref{fig:BLER_4x4_SIR1dB} compares the performance of the proposed
scheme for two scenarios with heavy CCI, i.e., $\mathrm{SIR}=1\mathrm{dB}$.
The ST--BICM code has rate $R=4$, i.e., $N_{T}=4$, and the number
of receive antennas is set to $N_{R}=4$. In the first scenario (Scenario
1), the interferer has $N'_{T}=4$ transmit antennas, $L'=2$ equal
power taps, and i.i.d. fading coefficients, while in the second scenario
(Scenario 2), $N'_{T}=2$, $L'=1$, and the CCI channel rank is equal
to two. Therefore, $\rho_{k}=2\,\,\forall k$, and condition (\ref{eq:Rank-Condition})
holds for $k\geq2$. It is clear that in the second scenario, better
CCI cancellation capability is achieved for $k\geq2$. For instance,
the performance gap for $k=3$ is more than $2\mathrm{dB}$ at $2*10^{-2}$
BLER. Also, the diversity order of the CCI-free MIMO-ISI channel is
almost achieved.%
\begin{figure}[t]
\noindent \begin{centering}
\includegraphics[scale=0.49]{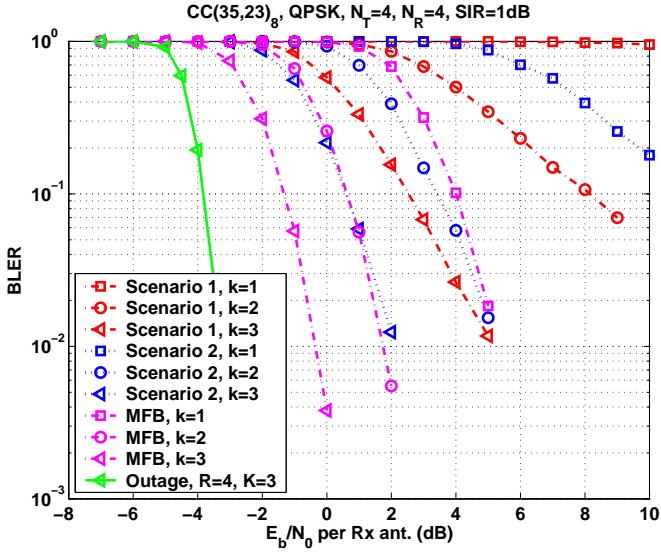}
\par\end{centering}

\caption{\label{fig:BLER_4x4_SIR1dB} BLER performance for CC $\left(35,23\right)_{8}$,
QPSK, $N_{T}=N_{R}=4$, $L=2$ equal energy paths, and $\mathrm{SIR}=1$dB.
Scenario 1: $N'_{T}=4$, $L'=4$, Scenario 2: $N'_{T}=2$, $L'=1$,
and $\rho_{k}=2,\, k=1,\cdots,K$. }

\end{figure}

\section{Conclusion\label{sec:Conclusion}}

In this paper, we investigated efficient iterative turbo packet combining
for broadband ST--BICM transmission with hybrid ARQ over CCI-limited
MIMO-ISI channels. We have introduced a frequency domain turbo combining
scheme where signals and CFRs corresponding to all ARQ rounds are
combined in a MMSE fashion to decode the data packet at each round.
The covariance of the overall (over all ARQ rounds) CCI plus noise
required by the frequency domain MMSE soft packet combiner is constructed
by separately computing the covariance related to each round. The
proposed technique has a complexity order cubic against the product
of the number of receive antennas and ARQ delay. This limitation is
overcome by an optimized recursive implementation algorithm where
complexity is only cubic in term of the number of transmit antennas.
We evaluated the computational load and memory requirements, and found
that the introduced recursive technique only requires few arithmetic
additions and memory compared to conventional LLR-level combining
schemes. We analyzed the effect of CCI channel rank on performance.
Interestingly, under a sum-rank condition, the frequency domain MMSE
soft packet combiner can completely remove CCI for asymptotically
high SNR. Finally, we provided simulation results where we showed
that the proposed technique achieves BLER performance superior to
LLR-level combining, and offers high CCI cancellation capability and
diversity order for many interference scenarios. 

\begin{center}
\textsc{Appendix}
\par\end{center}

\begin{center}
\textsc{Proof of Theorem \ref{thm:CCI_supress_prop}}
\par\end{center}

Under the assumption of perfect LLR feedback from the SISO decoder,
the frequency domain soft packet combiner output (\ref{eq:Forward_Backward_FD}),
at ARQ round $k$, can be expressed as,

\begin{equation}
\mathbf{z}_{f_{\mathrm{perfect\, LLR}}}^{\left(k\right)}=\mathbf{A}\mathbf{s}_{f}+\mathbf{x}_{f}^{\left(k\right)},\label{eq:FilerOutput_perfectLLRs}\end{equation}
where $\mathbf{A}$ is the diagonal matrix of frequency domain symbol
gains,\begin{multline}
\mathbf{A}=\mathrm{diag}\left\{ (\mathbf{G}_{0}^{\left(k\right)})_{1,1},\cdots,(\mathbf{G}_{0}^{\left(k\right)})_{N_{T},N_{T}},\cdots,\right.\,\,\,\,\,\,\,\,\,\,\,\,\,\,\,\,\\
\left.(\mathbf{G}_{T-1}^{\left(k\right)})_{1,1},\cdots,(\mathbf{G}_{T-1}^{\left(k\right)})_{N_{T},N_{T}}\right\} ,\label{eq:FilerOutput_perfectLLRs_MatrixA}\end{multline}
 with $\mathbf{G}_{i}^{\left(k\right)}=\underline{\mathbf{\Lambda}}_{i}^{\left(k\right)^{H}}\boldsymbol{{\Xi}}_{k}^{-1}\underline{\mathbf{\Lambda}}_{i}^{\left(k\right)}$,
and $\mathbf{x}_{f}^{\left(k\right)}$ is the filtered CCI plus thermal
noise at the output of the packet combining filter. Its covariance
matrix is \begin{equation}
\mathbf{G}^{\left(k\right)}=\mathrm{diag}\left\{ \mathbf{G}_{0}^{\left(k\right)},\cdots,\mathbf{G}_{T-1}^{\left(k\right)}\right\} .\label{eq:FilerOutput_perfectLLRs_MatrixG}\end{equation}
Now, let us examine the structure of matrix $\mathbf{G}_{i}^{\left(k\right)}$
for asymptotically high SNR, i.e., $\sigma^{2}\rightarrow0$. 

Let $\boldsymbol{{\Pi}}_{1}\boldsymbol{{\Pi}}_{1}^{H},\cdots,\boldsymbol{{\Pi}}_{k}\boldsymbol{{\Pi}}_{k}^{H}$
be the low-rank decompositions of matrices $\boldsymbol{{\Theta}}_{1}^{\mathrm{CCI}},\cdots,\boldsymbol{{\Theta}}_{k}^{\mathrm{CCI}}$,
where $\boldsymbol{{\Pi}}_{1}\in\mathbb{C}^{N_{R}\times\rho_{1}},\cdots,\boldsymbol{{\Pi}}_{k}\in\mathbb{C}^{N_{R}\times\rho_{k}}$.
For the sake of notation simplicity, we write $\sum_{u=1}^{k}\rho_{u}=\rho$.
It follows that the rank of $\boldsymbol{{\Pi}}=\mathrm{diag}\left\{ \boldsymbol{{\Pi}}_{1},\cdots,\boldsymbol{{\Pi}}_{k}\right\} $
is $\rho$, and $\boldsymbol{{\Xi}}_{k}=\boldsymbol{{\Pi}}\boldsymbol{{\Pi}}^{H}+\sigma^{2}\mathbf{I}_{kN_{R}}$
is a square invertible matrix. Therefore, it has an eigenvalue decomposition
(E.V.D) that can be expressed as,\begin{multline}
\boldsymbol{{\Xi}}_{k}=\underbrace{\left[\begin{array}{cc}
\mathbf{P}_{\rho} & \mathbf{P}_{kN_{R}-\rho}\end{array}\right]}_{\mathbf{P}}\left[\begin{array}{cc}
\boldsymbol{{\Upsilon}}+\sigma^{2}\mathbf{I}_{\rho}\\
 & \sigma^{2}\mathbf{I}_{kN_{R}-\rho}\end{array}\right]\times\,\,\,\,\,\,\,\,\,\,\,\,\,\,\,\,\\
\left[\begin{array}{c}
\mathbf{P}_{\rho}^{H}\\
\mathbf{P}_{kN_{R}-\rho}^{H}\end{array}\right],\label{eq:EVD}\end{multline}
where $\mathbf{P}\mathbf{P}^{H}=\mathbf{P}^{H}\mathbf{P}=\mathbf{I}_{kN_{R}}$
since $\boldsymbol{{\Xi}}_{k}$ is symmetric. This condition yields
the following set of equalities,

\addtocounter{equation}{0}\begin{subequations}

\begin{eqnarray}
\mathbf{P}_{\rho}^{H}\mathbf{P}_{\rho} & = & \mathbf{I}_{\rho},\label{eq:eigenmatrix_cond1}\\
\mathbf{P}_{kN_{R}-\rho}^{H}\mathbf{P}_{kN_{R}-\rho} & = & \mathbf{I}_{kN_{R}-\rho},\label{eq:eigenmatrix_cond2}\\
\mathbf{P}_{\rho}^{H}\mathbf{P}_{kN_{R}-\rho} & = & \boldsymbol{{0}},\label{eq:eigenmatrix_cond3}\\
\mathbf{P}_{\rho}\mathbf{P}_{\rho}^{H}+\mathbf{P}_{kN_{R}-\rho}\mathbf{P}_{kN_{R}-\rho}^{H} & = & \mathbf{I}_{kN_{R}}.\label{eq:eigenmatrix_cond4}\end{eqnarray}

\noindent \end{subequations}Therefore, a Taylor expansion of $\boldsymbol{{\Xi}}_{k}^{-1}$
when $\sigma^{2}\rightarrow0$, is given as, \begin{equation}
\boldsymbol{{\Xi}}_{k}^{-1}=\mathbf{P}_{\rho}\boldsymbol{{\Upsilon}}^{-1}\mathbf{P}_{\rho}^{H}+\sigma^{-2}\mathbf{I}_{kN_{R}}+\mathcal{O}\left(\sigma^{2}\right).\label{eq:TaylorExpansion}\end{equation}
Note that $\boldsymbol{{\Upsilon}}$ does not have any null diagonal
element, i.e., $\boldsymbol{{\Upsilon}}$ is invertible. Indeed, multiplying
the left and right sides of (\ref{eq:EVD}) by $\mathbf{P}^{H}$ and
$\mathbf{P}$, respectively, and with respect to (\ref{eq:eigenmatrix_cond1}),
we get, $\mathbf{P}_{\rho}^{H}\boldsymbol{{\Pi}}\boldsymbol{{\Pi}}^{H}\mathbf{P}_{\rho}=\boldsymbol{{\Upsilon}}$.
By noting that $\mathbf{P}_{\rho}^{H}\boldsymbol{{\Pi}}$ is $\rho\times\rho$
and has rank equal to $\rho$, it follows that $\boldsymbol{{\Upsilon}}^{-1}=\left(\boldsymbol{{\Pi}}^{H}\mathbf{P}_{\rho}\right)^{-1}\left(\mathbf{P}_{\rho}^{H}\boldsymbol{{\Pi}}\right)^{-1}$.
Therefore, when $\sigma^{2}\rightarrow0$, we have, \begin{equation}
\mathbf{G}_{i}^{\left(k\right)}=\underline{\mathbf{\Lambda}}_{i}^{\left(k\right)^{H}}\mathbf{P}_{\rho}\boldsymbol{{\Upsilon}}^{-1}\mathbf{P}_{\rho}^{H}\underline{\mathbf{\Lambda}}_{i}^{\left(k\right)}+\sigma^{-2}\underline{\mathbf{\Lambda}}_{i}^{\left(k\right)^{H}}\underline{\mathbf{\Lambda}}_{i}^{\left(k\right)}+\mathcal{O}\left(\sigma^{2}\right).\label{eq:MatrixG_i_kTaylorExpansion}\end{equation}
Since the time domain channel coefficients are i.i.d., it follows
that the $kN_{R}\times N_{T}$ matrix $\underline{\mathbf{\Lambda}}_{i}^{\left(k\right)}$
has full-column rank unless all fading coefficients are equal to zero.
If $\rho+N_{T}<kN_{R}$, i.e., $\rho<kN_{R}-N_{T}$, then all the
first $\rho$ columns of $\boldsymbol{{\Xi}}_{k}$ (column vectors
of $\mathbf{P}_{\rho}$) are in the kernel of $\underline{\mathbf{\Lambda}}_{i}^{\left(k\right)^{H}}$,
i.e., $\underline{\mathbf{\Lambda}}_{i}^{\left(k\right)^{H}}\mathbf{P}_{\rho}=\boldsymbol{{0}}_{N_{T}\times\rho}$.
It follows that, when $\sigma^{2}\rightarrow0$, \begin{equation}
\mathbf{G}_{i}^{\left(k\right)}=\sigma^{-2}\underline{\mathbf{\Lambda}}_{i}^{\left(k\right)^{H}}\underline{\mathbf{\Lambda}}_{i}^{\left(k\right)}+\mathcal{O}\left(\sigma^{2}\right).\label{eq:MatrixG_i_kTaylorExpansion_Exp2}\end{equation}
Therefore, when SNR$\rightarrow\infty$, we get

\begin{align}
\mathrm{SINR} & =\frac{1}{\sigma^{2}}T\sum_{i=0}^{T-1}\mathrm{tr}\left\{ \underline{\mathbf{\Lambda}}_{i}^{\left(k\right)^{H}}\underline{\mathbf{\Lambda}}_{i}^{\left(k\right)}\right\} +\mathcal{O}\left(\sigma^{2}\right)\nonumber \\
 & =\underbrace{\frac{1}{\sigma^{2}}\sum_{l=0}^{L-1}\sum_{u=1}^{k}\mathrm{tr}\left\{ \underline{\mathbf{H}}_{l}^{\left(u\right)^{H}}\underline{\mathbf{H}}_{l}^{\left(u\right)}\right\} }_{\mathrm{SNR_{MF}}}+\mathcal{O}\left(\sigma^{2}\right),\label{eq:Final_SINR_Expr}\end{align}

\noindent where $\mathrm{SNR_{MF}}$ corresponds to the instantaneous
matched filter (MF) SNR in the case of $k$ rounds CCI-free MIMO-ISI
ARQ channel.\hfill{} $\blacksquare$

\begin{center}
\textsc{Acknowledgment}
\par\end{center}

The authors would like to thank Dr. Matthew Valenti for coordinating
the review process, and the three anonymous reviewers for their very
helpful comments and suggestions. 

\vspace{-1cm}
\begin{biography}[{\includegraphics[width=1in,clip,keepaspectratio]{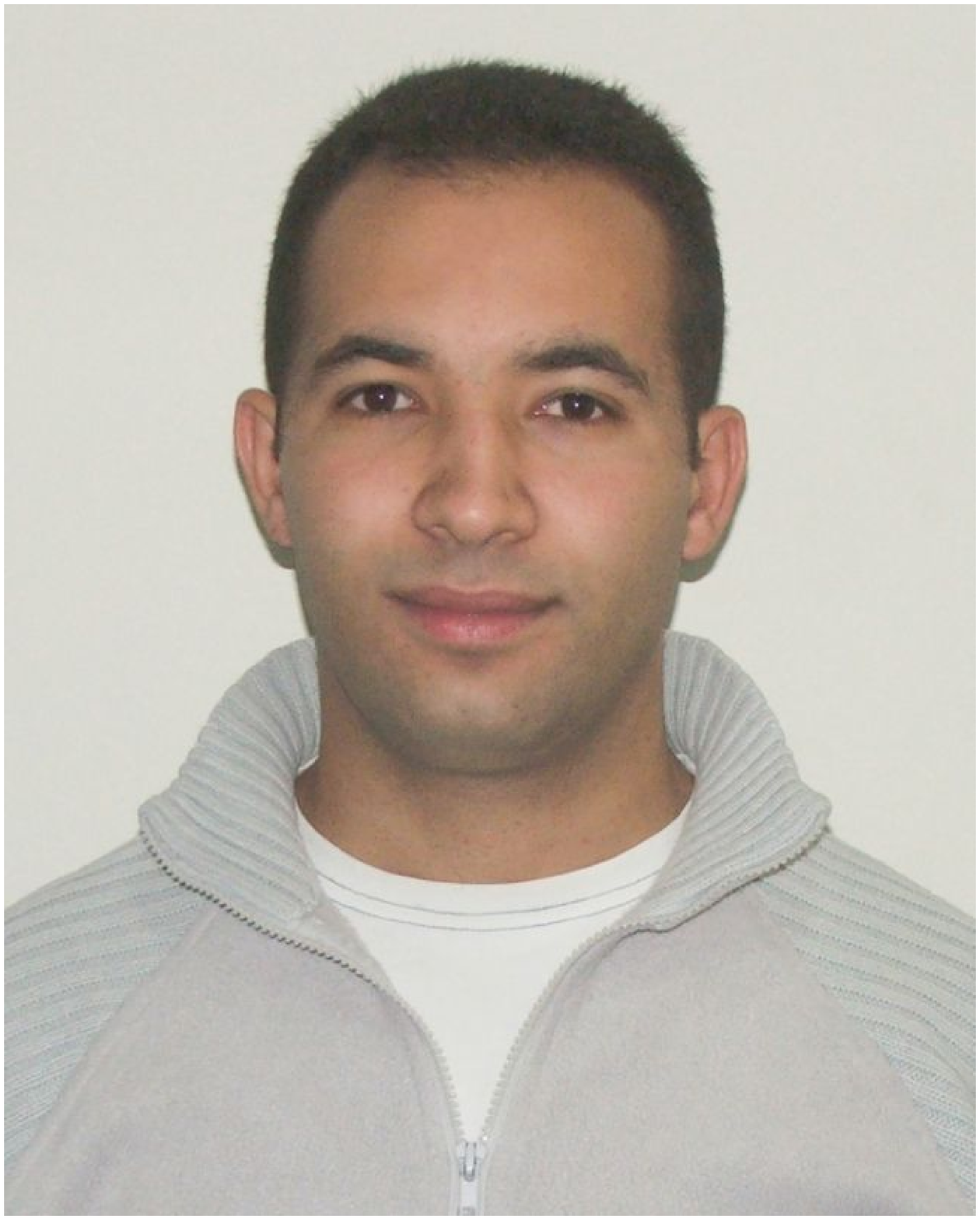}}]{Tarik Ait-Idir}
(S'06-M'07) was born in Rabat, Morocco, in 1978. He received the "diplôme d'ingénieur d'état" in telecommunications from INPT, Rabat, and the Ph.D. degree in electrical engineering from Telecom Bretagne, Brest, France, in 2001, and 2006, respectively. He is currently an Assistant Professor of wireless communications at the Communication Systems department, INPT, Rabat. He is also an adjunct researcher with Institut Telecom / Telecom Bretegne/LabSticc. From July 2001 to February 2003 he was with Ericsson. His research interests include PHY and cross-layer aspects of MIMO systems, relay communications, and dynamic spectrum management. Dr. Ait-Idir has been on the technical program committee of several IEEE conferences, including ICC, WCNC, PIMRC, and VTC, and chaired some of their sessions. He has been a technical co-chair of the MIMO Systems Symposium at IWCMC 2009, and IWCMC 2010.   
\end{biography}

~

~

~

\noindent \begin{biography}[{\includegraphics[width=1in,clip,keepaspectratio]{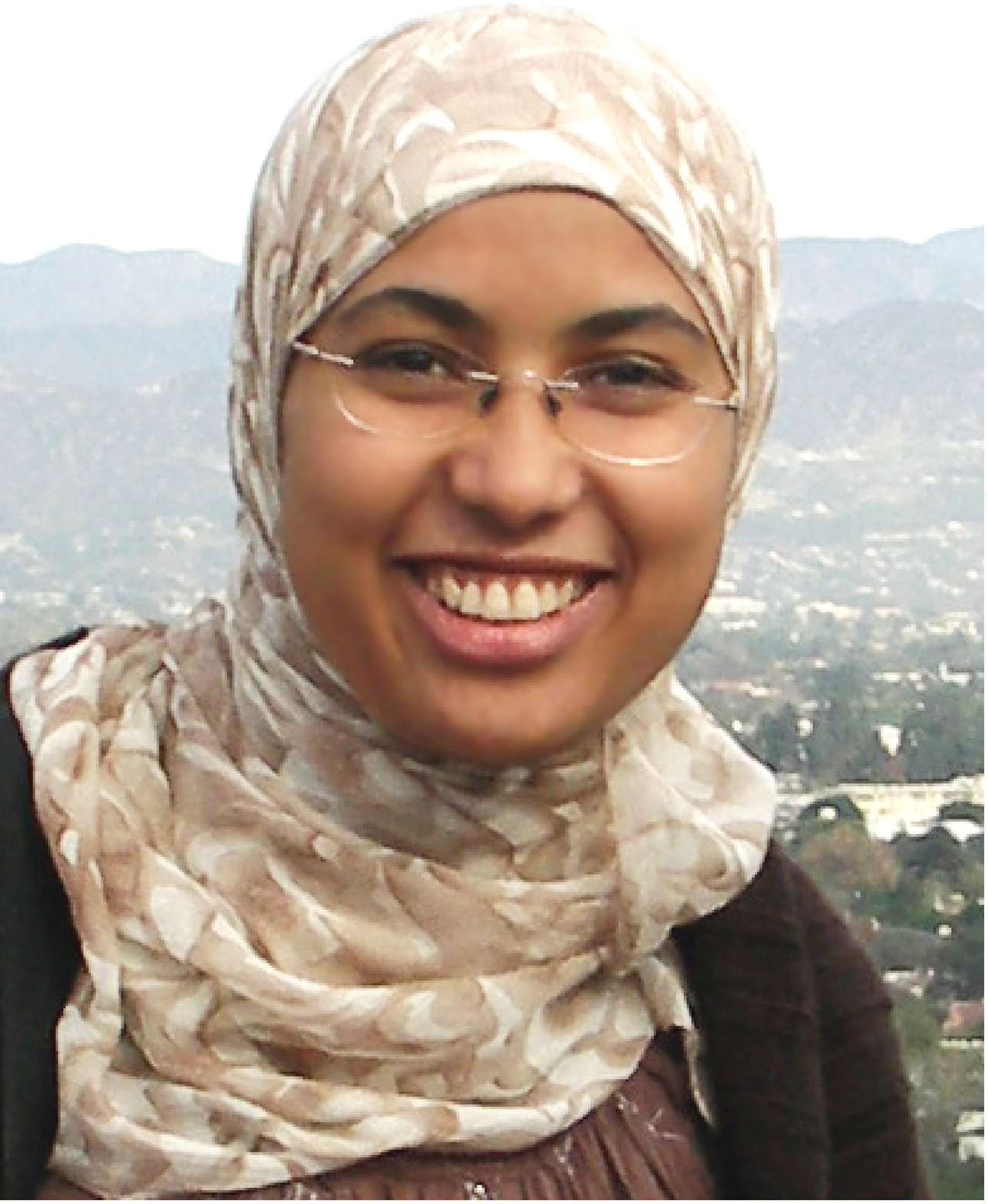}}]{Houda Chafnaji}
received the "diplôme d'ingénieur d'état" in telecommunications in 2004 from INPT, Rabat, Morocco, and  the M.Sc. in telecommunications in 2006 from INRS, Montreal, Canada. She is currently a PhD. candidate at Telecom-Bretagne, Brest, France. Since 2006, she has been a lecturer at the department of Communications Systems, INPT. Her research interests include wireless communications with a focus on physical and MAC layers design, Hybrid ARQ, cooperative communication, packet combining, and performance evaluation of wireless communications systems.   
\end{biography}

~

~

~

\noindent \begin{biography}[{\includegraphics[width=1in,clip,keepaspectratio]{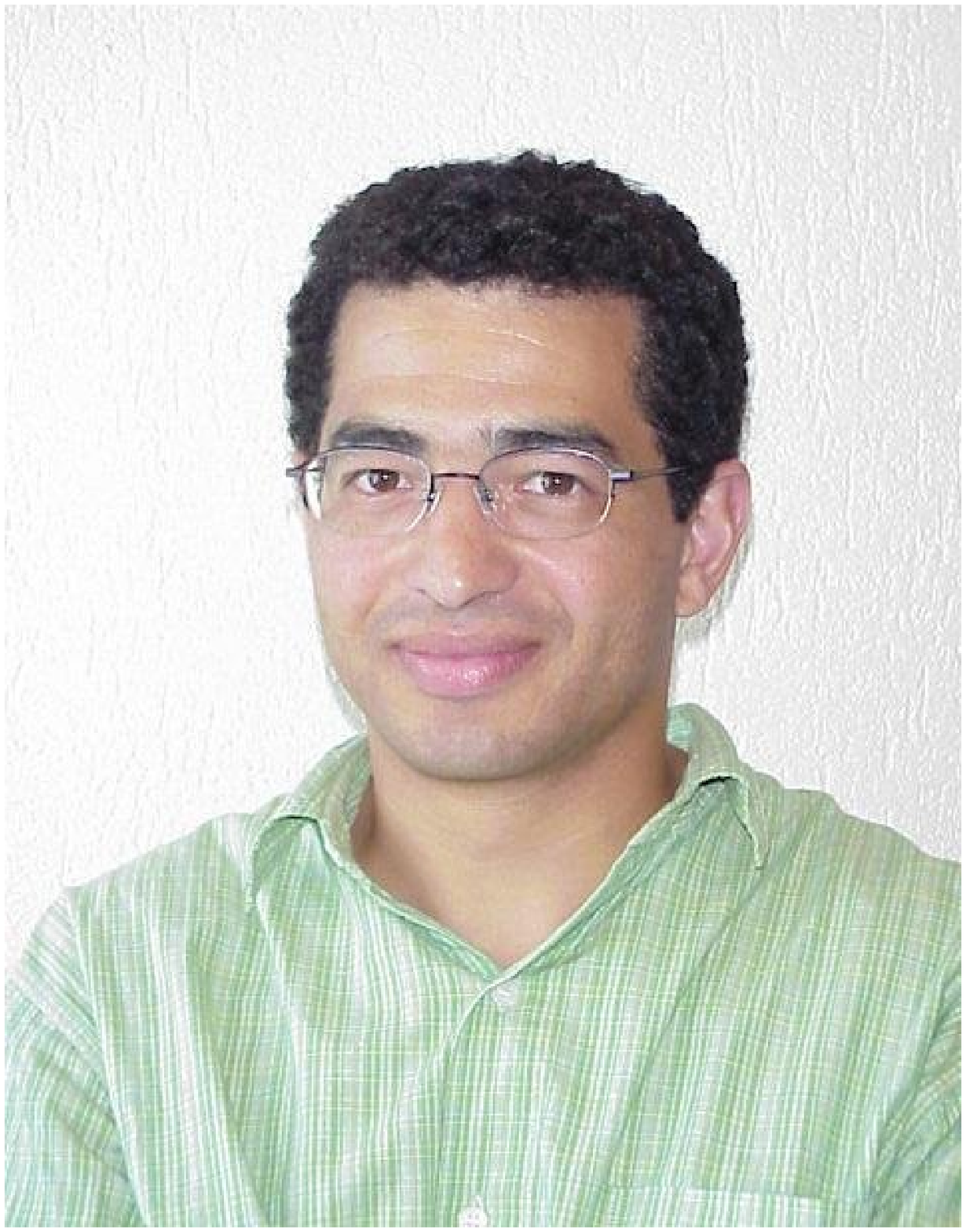}}]{Samir Saoudi}
(M'01-SM'09) was born in Rabat, Morocco, on November 28, 1963. He received the "diplôme d'ingénieur d'état" from ENST Bretagne, Brest, France, in 1987, the Ph.D. degree in telecommunications from the 'Université de Rennes-I' in 1990, and the "Habilitation à Diriger des Recherches en Sciences" in 1997. Since 1991, he has been with the Signal and Communications department, Institut Telecom / Telecom Bretegne/LabSticc, where he is currently a Professor. He is also with Université Européenne de Bretagne. In summer 2009, he has visited Orange Labs-Tokyo. His research interests include speech and audio coding, non parametric probability density function estimation, CDMA techniques, multiuser detection and MIMO techniques for UMTS and HSPA applications. His teaching interests are signal processing, probability, stochastic processes and speech processing. Dr. Saoudi supervised more than 20 Ph.D. Students. He is the author and/or coauthor of around eighty publications. He has been the general chairman of the second International Symposium on Image/Video Communications over fixed and mobile networks (ISIVC'04), and technical co-chair of the MIMO systems symposium at IWCMC 2010.
\end{biography}
\end{document}